\newcommand*{\tikzmk}[1]{\tikz[remember picture,overlay,] \node (#1) {};\ignorespaces}
\newcommand{\boxit}[1]{\tikz[remember picture,overlay]{\node[yshift=3pt,fill=#1,opacity=.25,fit={(A)($(B)+(.87\linewidth,0\baselineskip)$)}] {};}\ignorespaces}
\colorlet{pink}{red!40}
\colorlet{blue}{cyan!60}
\newtheorem{theorem}{Theorem}
\newtheorem{lemma}{Lemma}
\newtheorem{corollary}{Corollary}
\newtheorem{definition}{Definition}
\newtheorem{result}{Result}
\newcommand{\suppMat}{supporting online material}
\newcommand{\secProof}{S1}
\newcommand{\secEfficiency}{S2}
\newcommand{\secTimeCost}{S2.1}
\newcommand{\corollaryOne}{1}
\newcommand{\theoremOne}{1}
\newcommand{\theoremTwo}{2}
\newcommand{\theoremThree}{3}
\newcommand{\tableWorstCase}{S1}
\newcommand{\Algorithm}{1}
\newcommand{\LoopInit}{9}
\newcommand{\Init}{1}
\newcommand{\LeafIdentificationCondition}{19}
\newcommand{\StartTransitionCondition}{26}
\newcommand{\Figure}{2}
\title{A memory and communication efficient algorithm for decentralized
  counting of nodes in networks}
\author{Arindam Saha, James A. R. Marshall and Andreagiovanni Reina%
  \thanks{A. Saha, J.A.R. Marshall, and A. Reina are with the Department
of Computer Science, University of Sheffield, S1 4DP, UK;
e-mails: (a.saha@sheffield.ac.uk, james.marshall@sheffield.ac.uk, a.reina@sheffield.ac.uk).}%
\thanks{Manuscript received ; revised.}}
\begin{document}

\maketitle

\begin{abstract}

Node counting on a graph is subject to some fundamental theoretical
limitations, yet a solution to such problems is necessary in many
applications of graph theory to real-world systems, such as collective
robotics and distributed sensor networks. Thus several stochastic  and
na{\"i}ve deterministic algorithms for distributed graph size
estimation or calculation have been provided. Here we present a
deterministic and distributed algorithm that allows every node of a
connected graph to determine the graph size in finite time, if an
upper bound on the graph size is provided. The algorithm consists in
the iterative aggregation of information in local hubs which then
broadcast it throughout the whole graph. The proposed node-counting
algorithm is on average more efficient in terms of node memory and
communication cost than its previous deterministic counterpart for
node counting, and appears comparable or more efficient in terms of
average-case time complexity. As well as node counting, the algorithm
is more broadly applicable to problems such as summation over graphs,
quorum sensing, and spontaneous hierarchy creation.
\end{abstract}


\section{Introduction}

\IEEEPARstart{A}{ll} decentralized systems share the common aspect of being comprised of a network of units (which can be considered as graph nodes) that rely on local and partial information which they can gather from the subset of devices in their communication range (communication links can be represented as graph edges). An open challenge is to allow the units of these large-scale decentralized systems to estimate properties of the entire group.

A fundamental property that is crucial for the design and the efficient functioning of several systems is the system size, that is, the number of units in the system. Computing the exact network size in finite time with a decentralized algorithm with finite complexity is proved to be impossible~\cite{Hendrickx2011}. Previously proposed solutions are therefore stochastic algorithms that only give an approximation of the system size, providing the possible advantages of robustness and speed. Deterministic algorithms provide the exact solution in a finite time, however, they may rely on stringent assumptions on the communication network topology. An overview of the existing algorithms is provided in Section~\ref{sec:stateart}.
%
We propose, in Section~\ref{sec:algo}, a new decentralized deterministic algorithm, the \textit{aggregate-and-broadcast} (AnB) algorithm, that iteratively aggregates the node counts into a small number of local hubs which finally broadcast the count throughout the whole network.
The AnB algorithm allows the nodes to compute the exact network size in a finite time when an upper bound is provided.
In other words, the network size computed by the AnB algorithm is exact up to a limit that is bounded by the algorithm's execution time, as proved in the \suppMat{}.
%
The algorithm relies on the only two assumptions of a connected network and uniquely identifiable units (i.e. unique ID), and requires minimal computation and communication capabilities of the units. The algorithm performance is analysed and when possible compared with previous algorithms in terms of time, communication, and memory costs (see Section~\ref{sec:analysis}). The results indicate that the AnB algorithm is scalable, efficient, and accurate, with better performance than the existing algorithms in terms of smaller memory and communication costs. Therefore, as discussed in Section~\ref{sec:conclusion}, the AnB algorithm can be beneficial for systems with constrained memory and communication, and has the potential to be employed in numerous application cases and impact a large variety of decentralized systems.



\section{Problem statement}
Consider a connected network $\mathcal{G}=(\mathcal{V}, E)$, where
$\mathcal{V}=\{1,\dots,N\}$ is the set of nodes in the network and
$E \subseteq \mathcal{V}\times \mathcal{V}$ is the set of the edges of
the network. The edges describe undirected and unweighted
communication links between nodes, i.e. $(u,v) \iff (v,u) \in E$. Each
node can only communicate at synchronous timesteps with its neighbors,
where the set of neighbors of the generic node $v$ is defined as
$\mathcal{N}_i =\{ u \in \mathcal{V} | (v,u) \in E\}$. We
  assume $\mathcal{G}$ to be time-invariant.
Each node is characterized by a unique identifier (id). Each node
knows an upper bound $N_{\text{max}}$ of the network size, such that
$N_{\text{max}} \ge N$. In this paper, we propose an algorithm to be
executed by every node of the network to allow them to compute the
network size $N$ in a finite amount of time $t_{\text{max}} \le
4N_{\text{max}} +1$.
Note that knowledge about $N_{\text{max}}$ is only necessary in order
to bound the execution time of the algorithm to $t_{\text{max}}$. This
is required due to the results reported by Hendrickx et
al.~\cite{Hendrickx2011} who have proved that it would be otherwise
impossible for a finite complexity algorithm to correctly count the
number of nodes (see discussion in Sec.~\ref{subsec:stopping}).




\section{State of the art}\label{sec:stateart}

Most of the algorithms proposed to estimate the size of the network rely on stochastic methods. The most common approach relies on executing variations of random walks on the network~\cite{Ganesh2007, Gjoka2010, Katzir2014, Musco2016a}. In particular, Ganesh et al.~\cite{Ganesh2007} used continuous time random walks to obtain a target number of redundant node samples. The time required to obtain such a sample was then used to estimate the network size. In a different study, Gjoka et al.~\cite{Gjoka2010} compared various weighted random walk techniques. The study identified efficient methods to identify various macroscopic properties of the network by simulating weighted random walks on the network (e.g. Metropolis-Hastings Random Walk and Re-Weighted Random Walk). Similarly, Katzir et al.~\cite{Katzir2014} proposed a method based on simulating multiple simultaneous random walks in order to estimate the size of the network. Building upon this work, Musco et al.~\cite{Musco2016a} proposed an algorithm where multiple nodes execute random walks and compute the network size based on the degrees of the nodes encountered. Notable stochastic algorithms which do not involve random walks rely on either average consensus~\cite{Jelasity2004} or on order statistics consensus~\cite{Lucchese2015, Varagnolo2014f, Lucchese2015a}.

One of the shortcomings of stochastic algorithms is that their run-times depend on the desired accuracy of the results. Therefore, for applications where the size of the network is required to a high degree of accuracy, stochastic algorithms might take a long time to converge. For instance, the number of dynamical attractors in Boolean networks and their periodicities depend on whether the network size is even or odd, prime or composite~\cite{Drossel2005}. Since dynamics on such networks are crucial in studying social networks, neural networks and gene and protein interaction networks~\cite{Green2007, Cheng2009, Davidich2008, Kauffman2004, Kauffman1969}, accurate knowledge of the network size is crucial. In such scenarios, deterministic algorithms to estimate the network size are better suited.

To the best of our knowledge, the number of deterministic algorithms for decentralized network node counting is very limited. One of the most trivial algorithms is the \textit{All-2-All} method, as alluded to in Ref.~\cite{Evers2011}. It consists in having each node broadcasting a unique id together with all ids that it has already received so far. This simple algorithm is the most efficient algorithm we are aware of for deterministic network node counting on general network topologies. Other algorithms for node counting have been proposed for networks with specific topologies. For example, an algorithm inspired by the Breadth-First-Search (BFS) algorithm can be used on a tree network. In 2003, Bawa et al.~\cite{Bawa2003} generalized such an algorithm so that it could be implemented on a network with a general topology. In their paper, the authors propose three different algorithms which may be used for computing various aggregates across the network. While the proposed algorithms are efficient, they investigated a different problem. They focus on the situations when the network size or the other aggregate quantities are sought by a single node of the network. When every node requires the size information, repeating the algorithm of~\cite{Bawa2003} on every node becomes less efficient than the All-2-All method, as described in Sec.~\ref{sec:analysis}.

\section{The aggregate-and-broadcast algorithm}\label{sec:algo}

We propose the aggregate-and-broadcast (AnB) algorithm, a deterministic algorithm for the simultaneous and decentralized determination of the size $N$ of a finite connected network by all its nodes. We assume that each node of the network has a unique id, can communicate only with its immediate neighbors, and knows $N_{\text{max}}$, the upper bound of the network size. Other than that, we make no prior assumptions about the topology of the network nor prior knowledge of the node. The underlying idea of the AnB algorithm is inspired by the standard node-counting method on a tree by its root. In a tree, the counts of the leaves are assimilated by their respective parents and then the leaves are iteratively pruned. Applying such an algorithm on a graph with a general topology poses a challenge since a strict hierarchy does not exist among the nodes. To overcome this problem, we add a step in each iteration where, based on the degree of its neighbors, each node determines its local hierarchy which, in turn, determines whether it should be pruned or not.

In the next subsections, we describe the proposed AnB algorithm in detail. We start with an overview of the entire algorithm in the next subsection. In subsections \ref{subsec:pre-iteration} and \ref{subsec:iteration}, we describe the pre-iteration steps (which include variable initialization) and the iteration steps of the algorithm respectively. Finally, in subsection \ref{subsec:remarks} we compare the AnB algorithm to the standard node counting algorithm in trees and make some further remarks about the proposed algorithm. The correctness of the AnB algorithm is proved in Sec.~\secProof{} of the \suppMat{}.

\subsection{An overview of the AnB algorithm}
\label{subsec:Overview}

Prior to the iterative steps, the nodes of the network are initialized as follows. The behavior of a node with id~$i$ at any particular instant is determined by its state~$s_i$ which can take one of four values during the course of the algorithm: `active'~$(A)$, `leaf'~$(L)$, `residue'~$(R)$, or `inactive'~$(I)$. The state of each node is initialized to $s_i = A$. Each node also starts with a local node counter $c_i=1$. This variable keeps track of the primary number of nodes in the network as locally known to the node at any point in time. Since, at the beginning of the algorithm, each node is aware only of its own existence, the counter is initialized to $1$. As the algorithm progresses, the node gathers information about the changing state of nodes (equivalent to the nodes getting `pruned') from its neighbors and updates the value in $c_i$. Additionally, each node also has the following other internal variables: the set of its neighbors $\mathcal{N}_i$, its effective neighborhood $\mathcal{E}_i$, effective degree $ e_i $, the set of residues ${\mathcal{R}}_i$ and final node count $n_i$. Among these, the first three variables are initialized to be empty sets $\mathcal{N}_i = \mathcal{E}_i = \mathcal{R}_i = \emptyset$, and the effective degree and final count variable are initialized as $e_i = n_i = 0$. We describe these variables in further details in the following paragraphs.

From the perspective of a node, the AnB algorithm is divided into two phases: `pre-reduction' and `post-reduction'. A node is said to be in pre-reduction phase when its state is either $s_i = A$ or $s_i = L$. As this phase progresses, a node in `active' state updates its local counter $c_i$ by locally accumulating information from `leaf' neighbors getting `pruned' until the node itself changes its state to $s_i = L$ and becomes a `leaf' node. Note that, here the term `leaf' is used to denote a node which is about to be `pruned' from the network; and not necessarily a node with only one neighbor. In the next iteration, each leaf node, depending on their effective neighborhood $\mathcal{E}_i$, again changes its state to either (a)~$s_i = I$ and gets `pruned', or (b)~$s_i = R$ and becomes a residue node.

At the end of pre-reduction phase, the nodes of the network are either in residue $(s_i = R)$ or inactive $(s_i = I)$ states. These states can be considered analogous to the `root' and the `pruned leaves' of a tree network respectively. The residue nodes contain parts of the total count of nodes in the network. This is similar to the root of a tree network which contains the total node count of the entire tree after all the nodes have been pruned. This information is then broadcast across all other nodes and assimilated to give the final node count of the network. To do this, each residue node constructs a `broadcast message' $b_i$, sends it to all its neighbors and changes its state to $s_i = I$. This broadcast message is then relayed by all nodes---irrespective of their state $s_i$---across the network. A node that receives a broadcast message adds the partial count to its final count variable $ n_i $, and keeps track of the residue nodes to avoid double counting. Thus, after sufficient time $t_{\text{max}}$, the variable $n_i$ gives the total count of all nodes in the network. The exact rules for updating the local count variables $c_i$ and state $s_i$, and constructing the broadcast message $b_i$ are given in subsection \ref{subsec:iteration}, and the details about the stopping criteria are provided in subsection~\ref{subsec:stopping}

\begin{figure*}
	\includegraphics[width=\textwidth]{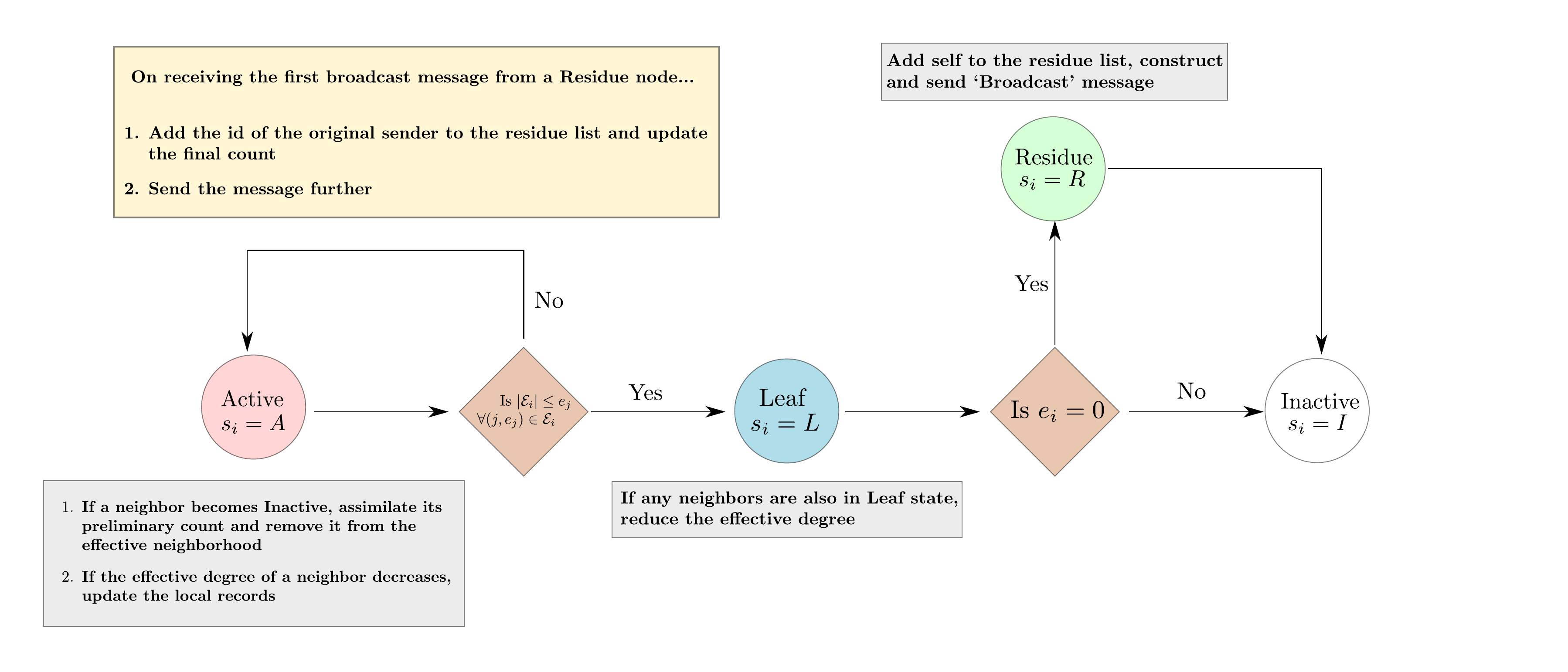}
	\caption{Schematic flowchart depicting the finite state machine of each node of the network executing the AnB algorithm. Note that the colors of the circles correspond to the colors of the section in Algorithm~\ref{al:Algorithm}. Also, the steps outlined in the yellow box are carried out by all nodes irrespective of their state.}
	\label{fig:flowchart}
\end{figure*}

\begin{algorithm}
	\label{al:Algorithm}
	\caption{The aggregate-and-broadcast (AnB) algorithm for network node counting}
        \small
	\BlankLine
	\BlankLine
	let $s_i \gets A$, $c_i \gets 1$, $\mathcal{N}_i \gets \emptyset$, $\mathcal{E}_i \gets \emptyset$, $\mathcal{R}_i \gets \emptyset$, $n_i \gets 0$\;  \label{st:init}
	send message $m_{i,\text{\upshape echo}}$\;
	\ForEach{message $ m_{j,\text{\upshape echo}} $ received}{
		$ \mathcal{N}_i \gets \mathcal{N}_i \cup \{ j \} $\;
		\label{st:neighborhood_identification} 
	}
	set $ e_i \gets |\mathcal{N}_i| $\;
	send message $m_{i,\text{\upshape degree}} = e_i$\;
	\ForEach{message $ m_{j,\text{\upshape degree}} $ received}{
		$ \mathcal{E}_i \gets \mathcal{E}_i \cup \{ \left( j, e_j \right) \} $\;
		\label{st:initial_effective_neighborhood} 
	}
	
	\BlankLine
	\For{$t = 0;\ t < t_{\text{max}};\ t = t + 1$}{ \label{st:loop_init}
		\BlankLine
		\tikzmk{A}
		\uIf(\tcp*[f]{Executed if node is in Active state}){$s_i = A$}{
			\ForEach{message $m_{j,h}$ received}{
				\If{$h = \text{\upshape count}$}{
					let $c_i \gets c_i + m_{j,\text{\upshape count}}$\; \label{st:c_i_update}
					let $\mathcal{E}_i \gets \mathcal{E}_i \setminus \left\{(j, e_j) \right\}$\;
					let $ e_i \gets e_i - 1 $\;
					send message $m_{i,\text{\upshape reduce}}$\;
				}
				\If{$h = \text{\upshape reduce}$}{
					let $\mathcal{K}_j \gets \left( j, e_j - 1 \right)$, where $\mathcal{K}_j \in \mathcal{E}_i$\; \label{st:E_i_update}
				}
			}
			\If{messages $ m_{j,\text{\upshape count}} $ are not received \upshape{\textbf{and}} $e_i \leq e_{j} ~ \forall (j, e_j) \in \mathcal{E}_{i}$\label{st:leaf_identification_condition}}{
				send message $m_{i,\text{\upshape leaf}}$\;
				let $s_{i} \gets L$\;
			}
		}
		\tikzmk{B}
		\boxit{pink}
		\BlankLine
		\tikzmk{A}
		\uElseIf(\tcp*[f]{Executed if node is in Leaf state}){$s_i = L$}{
			\ForEach{message $m_{j,\text{\upshape leaf}}$ received}{
				let $e_i \gets e_i - 1$\;
			}
			\eIf{$e_i = 0$\label{st:start_transition_condition}}{
				let $s_i \gets R$\;
			}{
				send message $m_{i,\text{\upshape count}} = \frac{c_i}{e_i}$\;
				let $s_i \gets I$\; \label{st:end_transition_condition}
			}
		}
		\tikzmk{B}
		\boxit{cyan}
		\BlankLine
		\tikzmk{A}
		\ElseIf(\tcp*[f]{Executed if node is in Residue state}){$s_i = R$}{
			let $\mathcal{R}_i \gets \mathcal{R}_i \cup \left\{ i \right\}$ and $n_i \gets n_i + c_i$\;
			send message $m_{i,\text{\upshape broadcast}} = \left( i, c_i \right)$ \;
			let $s_i \gets I$\;
		}
		\tikzmk{B}
		\boxit{green}
		\BlankLine
		\tikzmk{A}
		\ForEach( \tcp*[f]{Executed at each iteration}){message $m_{j,\text{\upshape broadcast}} = (k, c_k)$ received}{
			\If{$(k \notin \mathcal{R}_i)$ \label{st:residue_check}}{
				let $\mathcal{R}_i \gets \mathcal{R}_i \cup \left\{ k \right\}$ and $n_i \gets n_i + c_k$\;
				send message $m_{i,\text{\upshape broadcast}} = m_{j,\text{\upshape broadcast}}$ \label{st:rebroadcast}
			}
		}
		\tikzmk{B}
		\boxit{yellow}
	}
\end{algorithm}

\subsection{Pre-iteration steps}
\label{subsec:pre-iteration}

We now describe the AnB algorithm (see Algorithm~\ref{al:Algorithm}) in detail. The actions taken by a node $i$ in a particular step are determined by its internal variables and the messages it receives from its neighbors, \textit{i.e.} the nodes in $\mathcal{N}_i$. 

Any message sent by a node is denoted as $m_{i,h}$, where $i$ is the sender of the message and $h$ is the `type' of the message. The `type' of the message determines the action to be taken by the receiver of the message. The various types of messages and their roles are summarized in Table~\ref{tab:message_types}. Note that every message is broadcast to the entire neighborhood $ \mathcal{N}_i $ and thus, can be accessed by all nodes in $\mathcal{N}_i$.

After the initialization of all internal variables, each node of the network identifies its neighborhood. To do so, it sends a message $m_{i,\text{echo}}$ indicating its presence to all its neighbors. It then receives similar messages $ m_{j,\text{echo}} $ from other nodes. The set of all nodes from which such a message is received is then identified as the neighborhood $ \mathcal{N}_i $ (Line~\ref{st:neighborhood_identification}).

One of the most crucial internal variables for the node is its effective degree $e_i$ which is the number of its neighbors which are in the active state ($s_i = A$). Since all nodes start in the active state, the initial effective degree of the node is the number of elements in its neighborhood: $e_i = \left| \mathcal{N}_i \right|$. In addition to its own effective degree, the node also needs to be aware of the effective degrees of those neighbors which are in active state. The node keeps track of this information in form of its effective neighborhood,
\begin{equation}
	\mathcal{E}_i = \left\{ (j,e_j) : j \in {\mathcal{N}}_i \text{ and } s_j = A \right\}.
\end{equation}
Therefore, $\mathcal{E}_i$ is a set of tuples where the first element of the tuple is the id of an active neighbor of $i$ and the second element is the effective degree of the neighbor.

\begin{table*}
	\caption{The different types of messages $m_{i,h}$ used in the AnB algorithm}
	\label{tab:message_types}
	\centering
	\begin{tabular}{|c|c|c|}
		\hline
		\rowcolor{lightgray}
		$h$ & Content & Role of the message \\
		\hline
		echo & - & Indicates the presence of the sender $ i $ \\
		degree & $e_i$ & Sends the initial effective degree $ e_i $ of the sender $ i $ \\
		leaf & - & Indicates the transition of the sender $ i $ to leaf state \\
		count & $c_i$ & Sends the local count $ c_i $ of the sender $ i $ \\
		reduce & - & Indicates the reduction of effective degree $ e_i $ of the sender $ i $ \\
		broadcast & $(k, c_k)$ & Sends or relays the broadcast message \\
		\hline
	\end{tabular}
\end{table*}


 The identification of neighborhood also allows the node to compute its initial effective degree $ e_i = \left| \mathcal{N}_i \right| $ and to send it to its neighbors as $ m_{i,\text{degree}} $. Thereafter, a node $ i $ receiving a message $ m_{j,\text{degree}} $ updates its effective neighborhood $ \mathcal{E}_i $ as described in Line~\ref{st:initial_effective_neighborhood}.

\subsection{Iteration steps}
\label{subsec:iteration}

After the pre-iteration steps of Sec.~\ref{subsec:pre-iteration}, the node $ i $ enters an iterative phase where its steps are determined by its state $ s_i $. The details of these state-dependent steps are illustrated in the finite state machine of Figure~\ref{fig:flowchart} and are elaborated as follows.


\begin{itemize}
	
	\item  \textbf{Active nodes:} Each active node $ i $ with $s_i = A$ first detects any change in its neighborhood. This change can be of two types: (a) Either some of its neighbors are transitioning to inactive state, which is indicated by a message of type $ h = \text{count} $; or (b) the effective degree of some of its neighbors is being reduced, which is indicated by a message of type $ h = \text{reduce} $. Therefore, upon receipt of a message $ m_{j,\text{count}} $ (of type $ h = \text{count} $), the node $ i $ excludes the sender from its effective neighborhood $ \mathcal{E}_i $, decreases its effective degree $ e_{i} $ by 1 and assimilates the contents of the message in its local count (Line~\ref{st:c_i_update}),
	  \begin{equation}
	    c_i = c_i + m_{j,\text{count}}.
	    \label{eq:c_i_update}
	  \end{equation}
	Since the effective degree of node $ i $ is decreased by 1, it sends a message $ m_{i,\text{reduce}} $ to its neighbors. For each message of type $ h = 5 $ received, the node updates the record of the effective degree corresponding to the sender of the message (Line~\ref{st:E_i_update}).
	
	After processing the incoming messages, the node $ i $ checks that two clauses are true. It checks whether it has not received this time step any message of type $h=\text{count}$, which would indicate that its measure $\mathcal{E}_i$ may be temporarily incorrect, and checks weather its own effective degree $ e_i = |\mathcal{E}_i| $ is the minimum among all its neighbors which are in active state (Line~\ref{st:leaf_identification_condition}). If both conditions are met, the node sends a message $ m_{i,\text{leaf}} $ and changes its state to $ s_i = L $; otherwise, the node stays in the active state for the next iteration.
		

  \item \textbf{Leaf nodes:} The node $ i $ in state $ s_i = L $ stays in this state for exactly one iteration and then changes its state to either $s_i = R$ or $s_i = I$. First, it processes any incoming message of the type $ h = \text{leaf} $. The reception of any such message implies that some of its neighbors have transitioned to the leaf state in the same time step, and are therefore no longer in the active state. For each message $ m_{j,\text{leaf}} $ received, the effective degree $ e_i $ of the node is reduced by one. After processing all incoming messages, the node $ i $ changes its state; if the effective degree $ e_i = 0 $, it change its state to $ s_i = R $ otherwise, it sends the message
  \begin{equation}
    m_{i,\text{count}} = \frac{c_i}{e_i}
    \label{eq:reduction_message_construction}
  \end{equation}
  and changes state to $s_i = I$ (Lines~\ref{st:start_transition_condition}--\ref{st:end_transition_condition}).
  
    \item \textbf{Residue nodes:} Each node $ i $ in state $s_i = R$ updates its residue set $ \mathcal{R}_i $ with its own id $ i $ and the total node counter $ n_i $ adding its local counter $ c_i $. It then broadcasts a message $m_{i,\text{broadcast}} = \left( i, c_i \right)$ and changes its state to $s_i = I$.
    
    \item \textbf{All nodes:} While the previous steps are executed by nodes in a specific state, the following steps are executed by all nodes of the network at each iteration irrespective of their state. Whenever a node $ i $ receives a message $ m_{j,\text{broadcast}} = \left( k, c_k \right)$ from any of its neighbors, it checks if node $ k $ is in the residue set $ \mathcal{R}_{i} $. If $ k \notin \mathcal{R}_i $, it means that the recipient node has received the message $ \left( k, c_k \right) $ earlier. In this case, the node $ i $ adds $ k $ to its residue set $ \mathcal{R}_i = \mathcal{R}_i \cup \{ k \} $, adds the corresponding local count $ c_k $ to its final node count $ n_i = n_i + c_k $ and finally relays the message forward by sending message $ m_{i,\text{broadcast}} = m_{j,\text{broadcast}} $.
    
    

\end{itemize}

After a sufficient number of time steps $t_{\text{max}}$, all nodes converge to the same final count $ n_i $ equal to the network size $ N $. A detailed analysis of the convergence time is provided in Sec.~\secTimeCost{} of the \suppMat{}.

\subsection{Stopping criteria}
\label{subsec:stopping}

The AnB algorithm terminates when sufficient time, $t_{\text{max}}$, has passed. This $t_{\text{max}}$ should be sufficiently large so that each broadcast message reaches every node of the network. However, determining an exact value for $t_{\text{max}}$ is impossible as reported by Hendrickx et al.~\cite{Hendrickx2011} who have shown that it is impossible for a finite complexity algorithm to correctly estimate the size of a network with probability one. If $t_{\text{max}}$ could be exactly determined for the network, we would be absolutely sure that each residue message has reached every node and hence, each node is aware of the size of the network. This would be in direct violation of the aforementioned result. However, depending on the prior knowledge about the network, various estimates of $t_{\text{max}}$ can be made as follows. In Sec.~\secProof{} (Corollary~\corollaryOne{}) of the \suppMat{}, we show that the maximum time required for all nodes to reach the final state, i.e., the inactive state, has the above boundary of $t_r = 3N+2$. It is also trivial that the number of time steps required to broadcast a message across a network of size $ N $ is, in the worst-case, $t_b = N-1$. Therefore, $t_{\text{max}}$ is bounded above by $ t_r + t_b = 4N+1 $. Hence, if an overestimate $ N_{\text{max}} $ of the network size is known apriori, we can set $ t_{\text{max}} = 4N_{\text{max}}+1 $ to know the exact size of the network in finite time.

\subsection{Remarks on the AnB algorithm}
\label{subsec:remarks}

As shown in Figure~\ref{fig:flowchart}, a node spends exactly one iterative step as a leaf, and at most one iterative step as a residue node. Therefore, a typical node spends most of its iterative steps in either active or inactive states.

We can now elaborate on the similarities between the proposed AnB algorithm and the standard node-counting method on a tree network which were indicated earlier. Nodes in a tree network can also be classified into four categories analogous to those in the AnB algorithm: (a)~the root (similar to $ s_i = R $), (b)~leaves (similar to $ s_i = L $), (c)~pruned leaves (similar to $ s_i = I $) and (d)~other nodes still in the network (similar to $ s_i = A $). In a tree network, leaves are easily identified as nodes with degree one. Since this is not true for a general network, we use the condition in Line~\ref{st:leaf_identification_condition} to identify, at each iteration step, the nodes which are to be labeled as leaves. After a node has been identified as a leaf in a tree network, it passes on its local count to its parent and gets transformed to a pruned leaf. In a tree network, the parent of each node is unique. However, in a general network, a leaf node may have more than one parent. Therefore, in the AnB algorithm, the local count of each leaf is divided equally among all parents to avoid over-counting number of nodes. Once the counts have been passed on, the leaf node becomes an inactive node, similar to the pruned leaves in a tree network. If there are no active neighbors (`parents') to which a node can pass on its local count, it becomes a residue node, which is similar to the root of the tree. While the structure of the tree implies that there can be only one root of a tree, there is no such restriction for a general network. Hence, the count of the size of a general network gets concentrated into the residue nodes which is then broadcast and recombined in the final stages of the AnB algorithm.

It is to be noted that each node checks for the reception of a message of type $ h = \text{broadcast} $ at each iteration. This is necessary because messages of type $ h = \text{broadcast} $ carry the node count of a part of the network as counted by a residue node. Therefore, all nodes which receive such a message should add it to their final count and send it further. This is in contrast with the other types of messages which are intended only for nodes in active or (as in case of $ h = \text{leaf} $) leaf states.


\section{Analysis of the algorithm}\label{sec:analysis}

In this section, we demonstrate the correctness of the AnB algorithm and analyze the algorithm performance in terms of time, communication, and memory costs against the known node-counting algorithms. We do not compare AnB with stochastic algorithms which only compute an estimate of the network size that increases over time, but we limit our comparison against algorithms that return the exact node count in a finite time: the All-2-All algorithm and the Single Tree (ST) algorithm~\cite{Bawa2003}.

The All-2-All algorithm is, to the best of our knowledge, the only known deterministic algorithm for node counting which can work on any type of connected network regardless of its topology. In the All-2-All algorithm, each node broadcasts its id, and all received ids, to all its neighbors and every node counts the number of received unique ids.

The ST algorithm, instead, is the most efficient of the three algorithms proposed by Bawa et al. in~\cite{Bawa2003}. Despite being stochastic, the ST algorithm is proved to return the exact network size in a finite time. The ST algorithm, similarly to AnB, relies on the construction of a tree-like hierarchy. However, in its original form, the ST algorithm allows only a single node to compute the network size. In order to allow all the nodes of the network to know the network size, the ST algorithm can be extended in the following two ways: (a) one randomly selected node executes the ST algorithm and then broadcasts the computed size to all other nodes; or (b) all the nodes of the network simultaneously execute the ST algorithm and compute the network size independently. Employing alternative (a) requires the nodes to be able to select in a decentralized way which node will execute the ST algorithm. Decentralized node-selection adds a new problem which may require further assumptions on the network topology or on the initial knowledge of the nodes~\cite{patterson2010leader}. Therefore, in our comparison against the ST algorithm, we employ alternative (b) by which every node makes an independent count of the network size.

We provide a comparison both as worst-case algorithm complexity and with generic analytical equations for each type of cost. When such analytical solutions are not possible, we provide the results of numerical simulations for specific graph topologies. In fact, the AnB algorithm is proved to work on any connected graph regardless on the graph topology. Through our analysis, we highlight the differences in performance for each topology.

\subsection{Correctness of the AnB algorithm}

In Sec.~\secProof{} of the \suppMat{}, a detailed proof of correctness of the algorithm is provided. A brief sketch of the proof is as follows. We begin by identifying a sequence of time steps of the algorithm when the variables $ e_i $ and $ \mathcal{E}_i $ correctly give correct information about the neighborhood of the node $ i $ (see Theorem~\theoremOne{}). We say that, at these time steps, the network is in the \textit{resting state}. We then show that, as the network progresses from one resting state to another, the number of active states decreases. During this process, the information about their local node counts $ c_i $ gets concentrated into the nodes which pass through the residue state (see Theorem~\theoremTwo{}). Therefore, when no active nodes are present in the the network, the information about the size of the network is concentrated in the nodes which passed through the residue state. This information is then broadcast throughout the network and is accumulated by each node (see Theorem~\theoremThree{}).

\subsection{Comparison with other algorithms in terms of complexity}


We compare the efficiency of the AnB algorithm against the All-2-All and the Single Tree (ST,~\cite{Bawa2003}) algorithms in terms of three aspects: (a)~the time required to compute the network size by every node, (b)~the number of messages sent by all nodes (i.e. the communication cost), and (c)~the minimum amount of memory required by each node to execute the algorithm (i.e. the memory cost).

Note that, it is difficult to compare the efficiency of AnB against most other stochastic algorithms because their efficiency depends on the desired accuracy of the results. The more accurate we want the results to be, the longer the stochastic algorithms should run, at the cost of increased time and/or communication costs. On the other hand, deterministic algorithms like ours give accurate results in a finite time and make possible asymptotic performance analysis.

%

The efficiency results for the AnB algorithm are derived in Sec.~\secEfficiency{} of the \suppMat{} and reported in Table~\ref{tab:generalcase}. We derive exact results for the communication and memory costs. Instead, computing a precise equation of the time cost is difficult, as it depends strongly on the topology of the network which evolves at every time step (see discussion in Sec.~\secTimeCost{}).  Through Theorem~\theoremThree{} in Sec.~\secProof{}, we computed the upper bound of the time complexity of AnB. To analyse the exact performance in terms of time, instead, we computed a set of numerical simulations on various graph topologies whose results are shown in Figure~\ref{fig:Scaling}.
In particular, we implemented and tested the AnB algorithm on four different types of random networks as listed in Table~\ref{tab:network_types}.
The results of our analysis show a qualitative difference in algorithm performance as a function of the network topology.
We employed these numerical simulations to compare the temporal performance of AnB with the All-2-All algorithm and to make general considerations on the execution time of the AnB algorithm (see also Sec.~\secTimeCost{}).

\begin{table*}
  \centering
  \renewcommand{\arraystretch}{1.3}
  \caption{Exact costs for the two algorithms for a general network with diameter $D$, average degree $d$, and $r$ residue nodes. For memory cost, we indicate the individual degree $d_i$ for the generic node $i$. The AnB algorithm is more efficient than the All-2-All and the ST methods in terms of memory and communication. Analytical solution for time is out of reach and we provide numerical results in Figure~\ref{fig:Scaling}.}\vspace{0.3cm}
\begin{tabular}{ |l|c|c|c|  }
  \hline
  \rowcolor{lightgray}
  Algorithm & Time & Communication & Memory \\
  \hline
  AnB & numerically in Fig.\,\ref{fig:Scaling}& $N(4+r+d)-r$ & $(2d_i+r+5) \log(N)$\\
  All-2-All & $D$ & $N^2$ & $N \log(N)$\\
  ST & $2D$ & $2N^2$  & $2N \log(N) + d_i N$ \\
  \hline
\end{tabular}
\label{tab:generalcase}
\end{table*}

\begin{figure}[ht]
  \centering
  \includegraphics[width=0.35\textwidth]{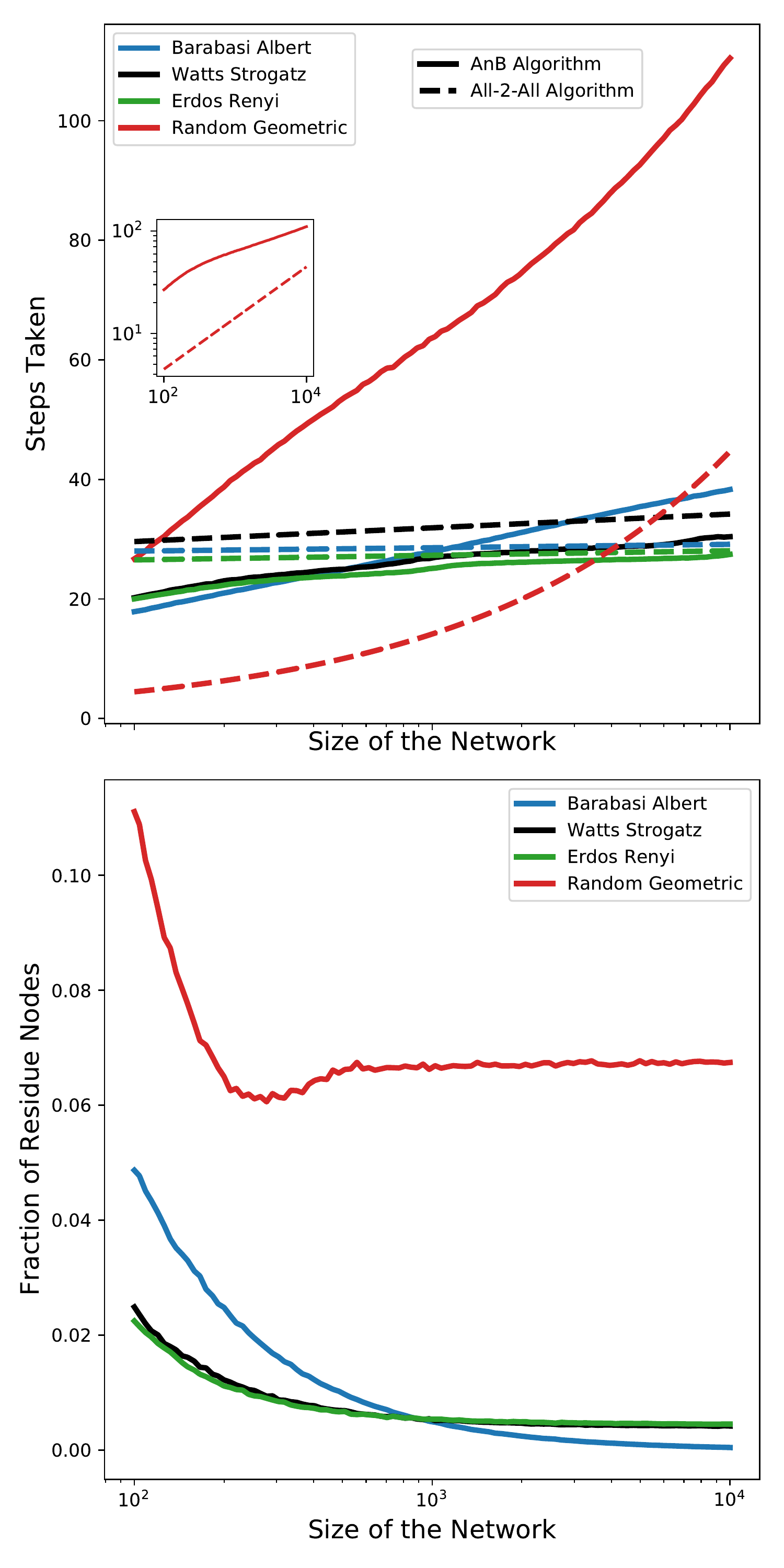}
  \caption{Numerically estimated time costs of the AnB algorithm. The left panel shows, on a log-linear scale, the total number of iterative steps taken by the AnB algorithm for different random networks (solid lines). The dashed lines show the scaling of the time for the All-2-All method, which corresponds to the network diameter $D$ from Table~\ref{tab:network_types}. The diameter is known up to a scaling factor, here we report curves scaled to values comparable to AnB's execution time to ease the comparison. In fact, the intersection of same-colour curves indicates that for large networks, the AnB algorithm is asymptotically slower than the All-2-All method. This is the case for all the analyzed network topologies but the Random Geometric networks. In RG networks, All-2-All shows a steeper curve that would slow down the process for very large networks (see inset on a log-log scale). The right panel shows the fraction of residue nodes $ x = \frac{r}{N} $ in the network. Low $ x $ implies low $ r $ and hence better performance of AnB algorithm in terms of memory and communications cost (see Table~\ref{tab:generalcase}). For each network size, we report the average results for the simulation of 1,000 independent random networks. (95\% confidence intervals are reported in the left panel as shades but often are smaller than the line width.)}
  \label{fig:Scaling}
\end{figure}

\begin{table*}[tb]
	\caption{The analyzed networks. Description of the internal parameters: $m$: Number of edges with which a new node attaches to existing nodes; $p_e$: Probability of forming an edge; $ k $: Number of nearest neighbors to which the node initially connects; $p_r$: Rewiring probability; $r$: Threshold distance unto which two nodes are connected.}\vspace{0.3cm}
	\label{tab:network_types}
	\centering
        \renewcommand{\arraystretch}{1.3}
	\begin{tabular}{|c|c|c|c|c|}
          \hline
          \rowcolor{lightgray}
		Type of network & Constructing algorithm & Internal parameters & Diameter $D$ \\
		\hline
		Scale-free & Barabas\'i Albert model~\cite{Barabasi1999} & $ m = 10 $  & ~ $ D \propto \frac{\log N}{\log \log N} $, by \cite{Cohen2003}\\
		Random & Erd\"os Reny\'i model~\cite{Erdos1960} & $ p_e = \frac{20}{N} $  & $ D \propto \frac{\log N}{\log \left( p_e N \right) } $, by \cite{Chung2001} \\
		Small-world & Watts Strogatz model~\cite{Watts1998} & $ k = 20, ~ p_r = 0.5 $ & $ D \propto \log N $, by \cite{Cohen2003} \\
		Random Geometric & Algorithm by Penrose~\cite{Penrose2003} & $ r = \sqrt{\frac{10}{N}} $ & $ D \propto \frac{\sqrt{2}}{r} $, by \cite{Ganesan2018} \\
		\hline
	\end{tabular}
\end{table*}

The time, communication, and memory costs for All-2-All algorithm are relatively easy to compute. In terms of time, the algorithm ends when the messages created by every node (containing its id) reach every other node. Therefore, the time required for this to happen is equal to the diameter~$D$ of the network. In terms of communication, since each node broadcasts the id of every node to its neighborhood, the number of messages sent by each node is $N$ and hence the total number of messages sent in the whole network is $ N^2$. Finally, in terms of memory, each node needs to store the id of every node in the network. Therefore, 
the minimum memory required by each node is $N \log(N)$, by assuming 
that each id needs at least $\log(N)$ bits.

The time and communication efficiency of the ST algorithm has been outlined by Bawa et al. in~\cite{Bawa2003}. We updated their efficiency measures in order to include the changes required to allow all nodes to compute the network size. Additionally, we derived the memory cost which was not originally indicated in~\cite{Bawa2003}. The details of the complexity analysis are reported in Sec.~\secEfficiency{} of the \suppMat{}; the results are reported in Table~\ref{tab:generalcase}.

The results in Table~\ref{tab:generalcase} show that the AnB algorithm has the lowest costs in terms of memory and computation compared with the All-2-All and ST algorithms (see also Figure~\ref{fig:Scaling_Com_Mem}).
The efficiency of the AnB algoritms is higher for networks which have the number of `residue' nodes $r$ much smaller than $N$. This is the case for most random networks as shown in Figure~\ref{fig:Scaling} (right panel).  Our analysis also shows that the largest share of communication messages are typically sent by the residue nodes and the largest memory is typically required to store the ids of the residue nodes. Since the fraction of residue nodes is low for all the analyzed network classes, with the AnB algorithm the nodes send comparatively fewer messages and have lower memory requirements than with the All-2-All and ST algorithms. The only cases where the All-2-All and ST algorithms might perform better than AnB in terms of memory and communication are completely connected networks, almost completely connected networks, and networks with specific topologies (such as ring networks).  In terms of time, Figure~\ref{fig:Scaling} (left panel) shows that the All-2-All method scales as the network diameter $D$ and the AnB algorithm has comparable, or slightly worse, time performance. Finally, in terms of all three complexity aspects (time, communication, and memory), in the worst case (i.e., when $d_i = N-1$ and $r = N$), the AnB algorithm has an asymptotically complexity equal to the other algorithms (see Table~\tableWorstCase{} in the \suppMat{}). Therefore, we conclude that the AnB algorithm is advantageous for applications with constrained or high-cost communication and memory, as confirmed by the results reported in Table~\ref{tab:generalcase} and Figure~\ref{fig:Scaling_Com_Mem}.

\begin{figure}[ht]
  \centering \includegraphics[width=0.35\textwidth]{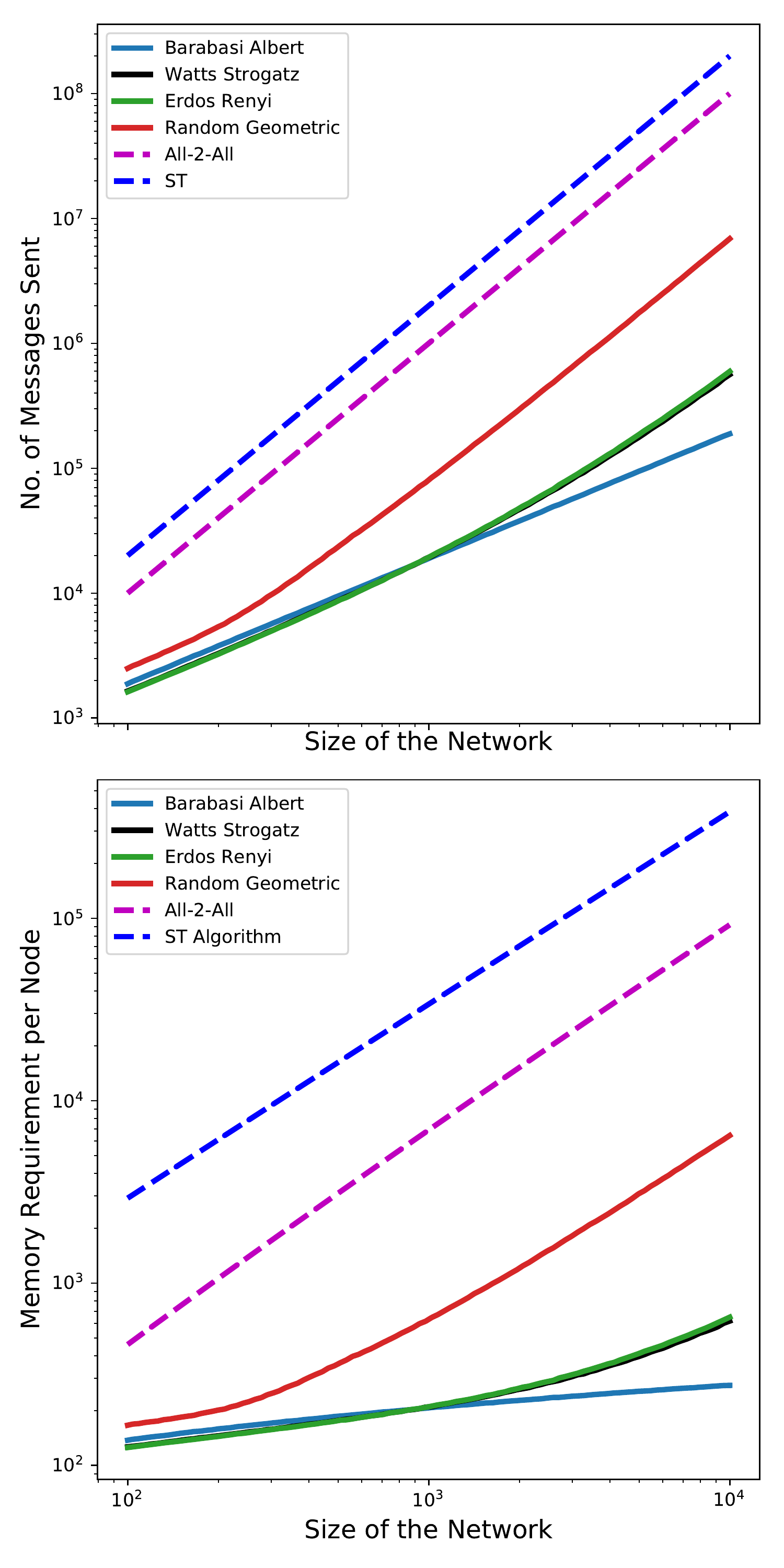}
  \caption{The AnB is the most efficient algorithm in terms of communication and memory costs, compared with the All-2-All and ST algorithms. The left panel shows the total number of messages sent by the nodes. The right panel shows the corresponding memory requirements per node with average connectivity degree $d$. In both panels, the dashed lines show the scaling for the All-2-All and ST algorithms, whereas the solid lines of various colors show the scaling for the AnB algorithm. Note that the number of messages sent and the memory requirements depends only on the network size for All-2All and ST algorithms and hence, are independent of the network topology. However, the number of messages sent and the memory requirements for AnB algorithm depends on the number of residue nodes which in turn depends on the topology of the network. Therefore, their dependence on the network topology is also explicitly shown.}
  \label{fig:Scaling_Com_Mem}
\end{figure}

\section{Conclusion}\label{sec:conclusion}

In this paper, we propose the AnB algorithm, a deterministic algorithm by which all nodes of a network can become aware of its size. The AnB algorithm assumes no inherent hierarchy among the nodes and no prior knowledge of the network topology. Instead, it depends on (a) the nodes having unique ids and (b) the nodes being able to communicate with its immediate neighbors. We also analyze the efficiency of the AnB algorithm and compare it against the known algorithms. We conclude that the AnB algorithm is significantly better than the known deterministic algorithms on average in terms of memory and communication costs. This has potential benefits in engineering where decentralized systems composed of a large number of units that operate without a central controller are spreading in various application domains, since they can offer scalable, cost-effective, robust solutions. Three examples of such domains are swarm robotics~\cite{Hamann2018}, internet of things~\cite{Atzori2010}, and wireless sensor networks~\cite{Rawat2014}.

In this concluding section, we outline some of the salient features of the AnB algorithm and the ways in which it can be extended and applied to various physical systems.

\begin{enumerate}
	
	\item \textbf{Quorum sensing:} It is notable that the local node counter $ c_i $ and the final count variable $ n_i $ are monotonic functions of time. Since both variables are aggregates of the size of the network, $ \max \left(c_i, n_i \right) $ gives a lower bound of the network size at any point in time. This can be useful in systems which are trying to determine if a quorum is present on not~\cite{Marshall2019}. Since in these cases the system is trying to determine if the network size is above a certain threshold or not, a node $ i $ can enter the broadcast phase as soon as $ c_i $ is greater than the threshold and inform the other nodes of the quorum being reached.
	
	\item \textbf{Spontaneous hierarchy creation:} While the AnB algorithm assumes no hierarchy among the nodes, the progression of the algorithm can be used to create it depending on the time when a node enters the broadcast phase. If a node enters the broadcasting phase late, it is more likely to be connected to nodes with high degrees, and hence be more `central'. Conversely, if a node enters the broadcasting phase earlier, it is more likely to be `peripheral'. While various other centrality measures exist for such classification of nodes in a network (for instance, closeness centrality~\cite{Bavelas1950} and betweenness centrality~\cite{Freeman1977}), they generally require the computation and ordering of a measure by a centralized agency. In the proposed AnB algorithm, the nodes can spontaneously organize themselves into a hierarchy.
	
	\item  \textbf{Computation of other aggregate quantities:} Similar to other previously known algorithms of network size estimation~\cite{Bawa2003, Musco2016a}, the AnB algorithm can also be used to compute other global properties across networks. For example, if each node $ i $ is associated with a property $ s_i $, they can compute the sum $ \sum s_i $ by simply setting $ c_i = s_i $ and executing the AnB algorithm. Similarly, other aggregate quantities such as averages and maximums/minimums can also be computed by suitably adopting the AnB algorithm. 
	
\end{enumerate}


\section*{Acknowledgment}

The authors acknowledge funding from the Office for Naval Research Global under grant no. 12547352 (the "Swarm Awareness" project).

\bibliographystyle{IEEEtran}
\bibliography{library}




\begin{IEEEbiography}[{
	\includegraphics[width=1in,height=1.25in,clip,keepaspectratio]{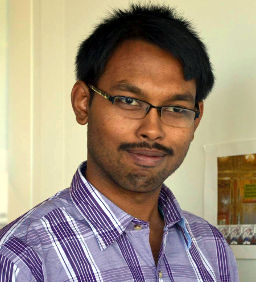}}]{Arindam Saha} is a Research Associate at the University of Sheffield. He is primarily interested in study of complex dynamical systems and has worked on various projects involving dynamics of interacting components on networks. network. He holds a MS in Theoretical Physics from IISER Kolkata, India and a PhD in Physics of Complex Systems from the University of Oldenburg. He has been working on the Swarm Awareness project since 2019.
\end{IEEEbiography}

\begin{IEEEbiography}[{
 \includegraphics[width=1in,height=1.25in,clip,keepaspectratio]{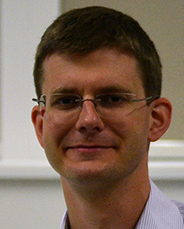}}]{James A. R. Marshall} is a Professor of Computer Science at the University of Sheffield. He leads the DiODe ERC CoG project (grant agreement no. 647704) and is co-investigator on the Swarm Awareness project. His research interests span collective behaviour, statistical decision theory, and mathematical and computational modelling of these. He is particularly interested in modelling behaviour in biological systems, and translating the results of these analyses to engineering. He is also co-founder and Chief Scientific Officer of Opteran Technologies Ltd.

\end{IEEEbiography}

\begin{IEEEbiography}[{
    \includegraphics[width=1in,height=1.25in,clip,keepaspectratio]{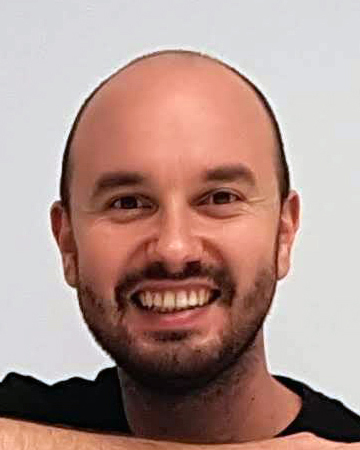}}]{Andreagiovanni Reina} is a Research Fellow in Collective Robotics at the
  University of Sheffield. He has been a member of the DiODe from 2015 to 2020 and has been the co-PI of the Swarm Awareness project since 2019. He holds a PhD in
  Applied Sciences from IRIDIA, Universit\'{e} Libre de Bruxelles,
  Belgium, and an MSc in Computer Engineering from Politecnico di
  Milano, Italy. He has been a researcher in five European projects on
  distributed robotic systems since 2009.
\end{IEEEbiography}

\appendices

\onecolumn

\section{Theorems and Proofs}
\label{sec:proof}

Let $ \mathcal{G} $ be a connected, undirected and unweighted network
whose size is to be determined by its nodes $ i $ using the AnB
algorithm. Let $ N > 0 $ be the size of $ \mathcal{G} $. Also, let
$ \mathcal{A}(t) $, $ \mathcal{B}(t) $ and $ \mathcal{C}(t) $ be the
sets of nodes in active, leaf and residue states respectively at time
$ t $. Let $ N_A(t) $, $ N_B(t) $ and $ N_C(t) $ be the number of
elements of $ \mathcal{A}(t) $, $ \mathcal{B}(t) $ and
$ \mathcal{C}(t)$, respectively. Let $ m_i(t) $ be the number of
messages of type $ h = \text{count} $ received by node $ i \in \mathcal{G} $ at
time $ t $. Finally, let $ \Gamma_i $ be the total number of neighbors
of node $ i $ and and $ \tilde{\Gamma}_i(t) $ be the number of
neighbors of the node $ i $ which are also in $ \mathcal{A}(t) $.

Let us first present the following trivial properties of these sets.

\begin{result}
	\label{result:1}
	The following statements are true for any network $ \mathcal{G} $ evolving under Algorithm~\Algorithm{}:
	\begin{enumerate}[label = (\alph*)]
		\item $ \mathcal{A}(t) $ can be partitioned as
		\begin{equation}\label{eq:A_t_Partition}
		\mathcal{A}(t) = \mathcal{A}(t+1) \cup \mathcal{B}(t+1).
		\end{equation}
		\item $ \mathcal{B}(t) $ and $ \mathcal{C}(t) $ are related as follows,
		\begin{equation}\label{eq:B_t_C_T_relation}
		\mathcal{C}(t+1) \subseteq \mathcal{B}(t).
		\end{equation}
		\item From Eqs.~\ref{eq:A_t_Partition} and \ref{eq:B_t_C_T_relation} we have,
		\begin{equation}\label{eq:C_B_A_relation}
		\mathcal{C}(t+2) \subseteq \mathcal{B}(t+1) \subseteq \mathcal{A}(t).
		\end{equation}
	\end{enumerate}
\end{result}


\begin{definition}
	Let the network $ \mathcal{G} $ said to be in resting state at time $ t \ge 0 $ if the following are true:
	\begin{equation}\label{eq:Resting_State_a}
	m_i (t) = 0 ~~~ \forall i \in \mathcal{G},
	\end{equation}
	\begin{equation}\label{eq:Resting_State_b}
	N_B(t) = 0.
	\end{equation}
	Let the ordered sequence of all times $ t \ge 0 $ when the network $ \mathcal{G} $ is in resting state be denoted by ($ T_n $) where $ n = \{0,1,2,\ldots\} $.
\end{definition}

\noindent
Using this definition, we state and prove the following results.

\begin{lemma}
	\label{lemma:1}
	The network $ \mathcal{G} $ is in resting state at time $ t = 0 $. In other words $ T_0 = 0 $.
\end{lemma}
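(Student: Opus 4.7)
The plan is to verify both defining conditions of the resting state, namely equations~(\ref{eq:Resting_State_a}) and~(\ref{eq:Resting_State_b}), directly from the initialization and the pre-iteration block of Algorithm~\Algorithm{}, and then conclude that $T_0=0$.

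First, I would establish that $N_B(0)=0$. Line~\Init{} initializes $s_i \gets A$ for every node $i \in \mathcal{G}$, and an inspection of the pre-iteration steps (everything preceding the loop on Line~\LoopInit{}) shows that none of them modifies the state variable $s_i$: they only exchange messages of type \textit{echo} and \textit{degree}, and populate $\mathcal{N}_i$, $e_i$, and $\mathcal{E}_i$. Hence, at the moment the loop starts executing at $t=0$, every node is still in the active state, so $\mathcal{B}(0)=\emptyset$ and condition~(\ref{eq:Resting_State_b}) holds.

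Next, I would verify that $m_i(0)=0$ for every $i \in \mathcal{G}$. Inspecting Algorithm~\Algorithm{}, the only place where a message of type $h=\text{count}$ is emitted is Line~\LeafIdentificationCondition{}'s sibling block inside the leaf branch (specifically the step preceding the transition to the inactive state, near Line~\StartTransitionCondition{}), which can only be executed by a node that is currently in the leaf state. Since no node has been in the leaf state at or before $t=0$, as just established, no count-message has ever been placed on the network prior to or during the first iteration. Consequently, no node can receive such a message at time $t=0$, which yields $m_i(0)=0$ for all $i$, i.e.\ condition~(\ref{eq:Resting_State_a}).

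Since both conditions of the resting-state definition are satisfied at $t=0$, the time $0$ belongs to the ordered sequence $(T_n)$, and because it is the least nonnegative integer, it must be its first element: $T_0=0$. The argument is essentially a careful reading of the initialization; the only subtlety, and thus the only place where a mistake could occur, is in confirming that the pre-iteration echo/degree handshake cannot spuriously create count-messages or leaf transitions, which a line-by-line check of Algorithm~\Algorithm{} rules out.
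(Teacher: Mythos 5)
Your proof is correct and follows essentially the same route as the paper's: both arguments verify the two resting-state conditions directly from the initialization, observing that all nodes start active (so $N_B(0)=0$) and that no count-messages can have been sent before $t=0$ (so $m_i(0)=0$ for all $i$). Your version merely spells out in more detail the line-by-line check that the echo/degree handshake cannot produce leaf transitions or count-messages, which the paper leaves implicit.
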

\begin{proof}
	Prior to setting $ t=0 $ in Line~\LoopInit{}, all nodes are set to be in active state (Line~\Init{}) and they send no messages of type $ h = \text{count} $. Therefore no nodes can receive any such message at $ t = 0 $. Hence, Eqs.~\ref{eq:Resting_State_a} and \ref{eq:Resting_State_b} are satisfied at $ t = 0 $.
\end{proof}

\begin{lemma}
	\label{lemma:0}
	If $ \mathcal{G} $ is in resting state at $ t=T_n $ and $ \mathcal{A}(T_n) = \emptyset $, then $ T_{n+1} = T_n+1 $.
\end{lemma}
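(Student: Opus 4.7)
The plan is to show directly that the two defining conditions of the resting state hold at time $T_n+1$, and that no earlier integer time between $T_n$ and $T_n+1$ can interpose. Since the candidate next resting time is $T_n+1$ itself, the second point is automatic, and the work reduces to verifying Eqs.~\ref{eq:Resting_State_a} and \ref{eq:Resting_State_b} at $t=T_n+1$.

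First I would dispense with the leaf count. By hypothesis $\mathcal{A}(T_n)=\emptyset$, so Result~\ref{result:1}(a) gives
\begin{equation}
\mathcal{A}(T_n+1)\cup\mathcal{B}(T_n+1)=\mathcal{A}(T_n)=\emptyset,
\end{equation}
hence $\mathcal{B}(T_n+1)=\emptyset$, i.e.\ $N_B(T_n+1)=0$. This immediately secures Eq.~\ref{eq:Resting_State_b}.

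Next I would argue that no count message is in flight from step $T_n$ to step $T_n+1$. Inspecting Algorithm~\ref{al:Algorithm}, a message of type $h=\text{count}$ is sent \emph{only} by a node executing the leaf block (the send of $m_{i,\text{count}}=c_i/e_i$ at Line~\ref{st:end_transition_condition}). Because the network is in resting state at $T_n$, Eq.~\ref{eq:Resting_State_b} gives $N_B(T_n)=0$, so no node traverses the leaf block during iteration $T_n$, and consequently no count message is transmitted at time $T_n$. Under the synchronous communication model, this means $m_i(T_n+1)=0$ for every $i\in\mathcal{G}$, establishing Eq.~\ref{eq:Resting_State_a} at $t=T_n+1$.

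Combining the two observations, $T_n+1$ is a time at which $\mathcal{G}$ is in resting state. Since $(T_n)$ is by definition the ordered sequence of all such times and $T_n+1$ is the smallest integer strictly greater than $T_n$, it must coincide with the next element of the sequence, i.e.\ $T_{n+1}=T_n+1$. The only potential subtlety lies in confirming that no \textbf{residue} node and no \textbf{broadcast} handler can emit a count-type message during step $T_n$; a quick inspection of the corresponding blocks of Algorithm~\ref{al:Algorithm} shows they only send messages of types broadcast, so this concern is vacuous and the argument closes cleanly.
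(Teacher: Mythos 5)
Your proof is correct and follows essentially the same route as the paper's: both verify the two resting-state conditions at $T_n+1$ by noting that $N_B(T_n)=0$ prevents any count message from being sent at $T_n$ (so $m_i(T_n+1)=0$), and that $\mathcal{B}(T_n+1)\subseteq\mathcal{A}(T_n)=\emptyset$ by the set relations of Result~1. The additional remarks you make (that $T_n+1$ is necessarily the \emph{next} resting time, and that residue/broadcast blocks emit no count messages) are implicit in the paper's argument and do not change the approach.
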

\begin{proof}
	The network $ \mathcal{G} $ being in resting state at $ t=T_n $ implies $ N_B(T_n) = 0 $. Hence, no message of type $ h=\text{count} $ is sent at $ t=T_n $. Therefore, $ m_i(T_n+1) = 0 ~ \forall ~ i \in \mathcal{G} $. Also, since $ \mathcal{A}(T_n) = \emptyset $ therefore, due to Eq.~\ref{eq:C_B_A_relation}, $ \mathcal{B}(T_n+1) \subseteq \mathcal{A}(T_n) = \emptyset $. Hence, the network $ \mathcal{G} $ is in the resting state at $ T_{n+1} = T_n+1 $.
\end{proof}

\begin{theorem}
	\label{theorem:1}
	For all time $ t = T_n $ when $ \mathcal{G} $ is in resting state and $ N_A(T_n) > 0 $, the following statements are true for any node $ i \in \mathcal{A}(T_n) $:
	\begin{enumerate}[label=(\alph*)]
		\item The variable $ e_i(T_n) = \tilde{\Gamma}_i(T_n) $.
		\item For each active node $ j $ in the neighborhood of $ i $, there is a corresponding element $ (j,e_j) \in \mathcal{E}_i(T_n) $ such that $ e_j = \tilde{\Gamma}_j(T_n) $.
	\end{enumerate}
\end{theorem}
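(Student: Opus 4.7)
The plan is to prove the theorem by induction on $n$, the index of the resting state, while carefully tracking the evolution of $e_i$ and $\mathcal{E}_i$ during the transient period between consecutive resting states. For the base case $n=0$ (so $T_0=0$ by Lemma~\ref{lemma:1}), the pre-iteration initialization gives every node $i$ the values $e_i = |\mathcal{N}_i| = \Gamma_i$ and $\mathcal{E}_i = \{(j,\Gamma_j) : j \in \mathcal{N}_i\}$; since every node is active at $t=0$, we have $\tilde{\Gamma}_i(0) = \Gamma_i$ and $\tilde{\Gamma}_j(0) = \Gamma_j$, so both claims hold trivially.

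For the inductive step, I would assume the invariant at resting state $T_n$ with $N_A(T_n) > 0$ and determine $T_{n+1}$. Let $\mathcal{A}^\star$ be the subset of $\mathcal{A}(T_n)$ whose leaf condition (Line~\LeafIdentificationCondition{}) is satisfied at $T_n$: by the inductive hypothesis, this test uses correct values, so $\mathcal{A}^\star = \mathcal{B}(T_n+1)$ is well-determined. If $\mathcal{A}^\star = \emptyset$, no messages are exchanged between $T_n$ and $T_n+1$, so $T_{n+1} = T_n+1$ and the invariant is preserved by inspection. Otherwise, I would trace the three intermediate iterations:
\begin{enumerate}[label=(\roman*)]
    \item At $T_n+1$, the new leaves process incoming leaf messages (reducing $e_j$ by the number of simultaneously-formed leaf neighbors), then split between residue ($e_j=0$) and inactive (sending a count message). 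Active nodes outside $\mathcal{A}^\star$ receive nothing that updates $e_i$ or $\mathcal{E}_i$; because their $\mathcal{E}_i$ still witnesses a neighbor with strictly smaller effective degree (the reason they did not transition at $T_n$), the leaf clause continues to fail, and they stay active. This step is not resting since $\mathcal{B}(T_n+1) \neq \emptyset$.
    \item At $T_n+2$, the surviving active nodes receive count messages from the newly-inactive leaves, update $c_i$, $e_i$, $\mathcal{E}_i$, and emit reduce messages; the residues emit broadcast messages and become inactive. Having received a count message, no active node satisfies the leaf clause, so $\mathcal{B}(T_n+2) = \emptyset$; however $m_i(T_n+2)>0$, so the step is still not resting.
    \item At $T_n+3$, only reduce messages arrive, which update the $e_j$ entries inside $\mathcal{E}_i$; no count or leaf messages are pending. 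Both clauses of the resting-state definition hold, giving $T_{n+1} = T_n + 3$.
\end{enumerate}

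The main obstacle is the bookkeeping that ties $e_i$ and $\mathcal{E}_i$ to the actual active neighborhood at $T_n+3$, and the key auxiliary fact I would establish is: \emph{if $i \in \mathcal{A}(T_n+3)$, then no neighbor of $i$ is in $\mathcal{C}(T_n+2)$}. The argument is a contradiction: a neighbor $j$ of $i$ in $\mathcal{C}(T_n+2)$ would have entered leaf state at $T_n$ with $e_j = \tilde{\Gamma}_j(T_n)$ (by the inductive hypothesis) and then have had its $e_j$ driven to zero by leaf messages, meaning every active neighbor of $j$ at $T_n$ transitioned to leaf at $T_n$; but $i$ is such an active neighbor and remained active, contradiction. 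With this in hand, every neighbor of $i$ that dropped out of $\mathcal{A}$ between $T_n$ and $T_n+3$ did so through the leaf-to-inactive path, sending exactly one count message to $i$. Hence $e_i$ decreased by precisely $\tilde{\Gamma}_i(T_n) - \tilde{\Gamma}_i(T_n+3)$, proving (a). For (b), I apply the same reasoning symmetrically to each still-active neighbor $j$ of $i$: $j$ has no residue neighbor, so the count messages $j$ processed at $T_n+2$ drove $e_j$ to $\tilde{\Gamma}_j(T_n+3)$, and the matching number of reduce messages from $j$ reached $i$ at $T_n+3$, bringing the $e_j$ field inside $\mathcal{E}_i$ to exactly $\tilde{\Gamma}_j(T_n+3)$. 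The entries in $\mathcal{E}_i$ corresponding to already-inactive neighbors have been removed by the count-message handler at $T_n+2$, so $\mathcal{E}_i$ contains precisely the correct set of tuples at $T_n+3$, completing the induction.
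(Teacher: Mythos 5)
Your proof follows the same route as the paper's: induction over the sequence of resting states, with the base case supplied by the pre-iteration initialization and the inductive step carried out by tracing the three intermediate time steps $T_n+1$, $T_n+2$, $T_n+3$ in which leaf, count, and reduce messages propagate and restore the invariant. Your auxiliary fact --- that a node still active at $T_n+3$ cannot have a neighbor in $\mathcal{C}(T_n+2)$ --- is a welcome explicit statement of something the paper only uses implicitly (it is what guarantees that every departing neighbor of a surviving active node sends exactly one count message, so that $e_i$ and the entries of $\mathcal{E}_i$ are decremented by exactly the right amounts), and your contradiction argument for it is correct.

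There is, however, one genuine gap. Step (ii) asserts that $m_i(T_n+2)>0$ and hence $T_{n+1}=T_n+3$ whenever $\mathcal{A}^\star\neq\emptyset$. This fails when \emph{every} node of $\mathcal{A}(T_n)$ satisfies the condition of Line~\LeafIdentificationCondition{} at $T_n$, i.e. when $\mathcal{A}(T_n+1)=\emptyset$: then each leaf sees all of its active neighbors turn into leaves simultaneously, drives its effective degree to zero, becomes a residue, and sends \emph{no} count message. Consequently $m_i(T_n+2)=0$ and $N_B(T_n+2)=0$, so the next resting state is $T_{n+1}=T_n+2$, not $T_n+3$. The paper carves this case out explicitly and observes that the theorem is then vacuously true at $T_{n+1}$ (no active nodes remain), so your conclusion survives, but the claim as written is false and the case must be treated separately. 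A smaller omission: at $T_n+2$ you rule out new leaf transitions only for nodes that received a count message; for active nodes that received none you must repeat the witness argument of step (i) (their $\mathcal{E}_i$ is unchanged and still records a neighbor of strictly smaller effective degree), as the paper does. Finally, the subcase $\mathcal{A}^\star=\emptyset$ that you allow for can never occur when $N_A(T_n)>0$: any active node of minimum effective degree satisfies Line~\LeafIdentificationCondition{} at a resting state, which is precisely what forces the algorithm to make progress.
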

\begin{proof}
	We will prove this theorem by induction.
	
	\begin{itemize}
		\item \textbf{Base case:} From Lemma~\ref{lemma:1} we know that $ \mathcal{G} $ is in resting state at $ t=0 $. Prior to setting $ t=0 $, the variable $ e_i(0) $ is initialized by counting the number of messages of type $ h=\text{echo} $ received. Since this message is sent by all nodes of the network and the message travels to all nodes in the neighborhood therefore, $ e_i(0) = \Gamma_i = \tilde{\Gamma}_i(0) $, since all nodes are in active state at $ t=0 $. Thereafter, these counts are sent to the entire neighborhood as messages of type $ h=\text{degree} $. The set $ \mathcal{E}_i(0) $ is constructed by the accumulating all such messages received. Therefore, for each $ (j,e_j)  \in \mathcal{E}_i(0) $, we have $ e_j = \Gamma_j = \tilde{\Gamma}_j(0) $.
		
		\item \textbf{Inductive step: } Let us assume that the theorem is true for some time $ t = T_k $ when $ \mathcal{G} $ is in resting state. Let $ T_{k+1} > T_k $ be the next time when $ \mathcal{G} $ is in resting state. In order to prove that the theorem is true for $ t = T_{k+1} $, we show the following in sequence:
		\begin{enumerate}[label = (\Alph*)]
			\item $ T_{k+1} > T_k + 1 $ and both parts (a) and (b) of the theorem are false at $ t = T_k + 1 $.
			\item If $ \mathcal{A}(T_k+1) = \emptyset $ then $ T_{k+1} = T_k+2 $ and the theorem is vacuously true, otherwise, if $ \mathcal{A}(T_k+1) \neq \emptyset $:
			\begin{enumerate}[label = (\roman*)]
				\item $ T_{k+1} > T_k + 2 $ and part (a) of the theorem is true and part (b) is false at $ t = T_k + 2 $.
				\item $ T_{k+1} = T_k + 3 $ and both parts (a) and (b) of the theorem are true at $ t = T_k + 3 $. 
			\end{enumerate}
			
		\end{enumerate}
		Thereby completing the inductive step.
		
		Let us consider  a node $ i \in \mathcal{A}(T_k) $ such that $ e_i = \min \{ e_j(T_k): j \in \mathcal{A}(T_k) \} $. All such nodes trivially satisfy the inequality in Line~\LeafIdentificationCondition{}. Also, since $ \mathcal{G} $ is in resting state, then $ m_i(T_k) = 0 $. Therefore, the condition stated in Line~\LeafIdentificationCondition{} is true for $ i $, which in turn implies that $ i $ sends a message of the form $ m_{i,\text{leaf}} $ at time $ T_k $ and would enter the leaf state in the next time step. Therefore,
		\begin{equation}\label{eq:N_B}
		N_B\left( T_k+1 \right) > 0.
		\end{equation}
		
		\begin{enumerate}[label=(\Alph*)]
			\item \textbf{Since $ N_B(T_{k}+1) \neq 0 $, we can say that $ T_{k+1} > T_k + 1 $.} At time $ T_k+1 $, the set $ \mathcal{A}(T_k+1) $ can be partitioned into three subsets:
			\begin{equation}\label{eq:A_partition}
			\mathcal{A}(T_k+1) = \mathcal{A}_0(T_k+1) \cup \mathcal{A}^\prime(T_k+1) \cup \mathcal{A}^{\prime\prime}(T_k+1)
			\end{equation} 
			where $ \mathcal{A}^\prime(T_k+1) $ is the set of active neighbors of nodes in $ \mathcal{B}(T_k+1) $, $ \mathcal{A}^{\prime \prime}(T_k+1) $ is the set of active neighbors of nodes in $ \mathcal{A}^\prime(T_k+1) $ which are not already in $ \mathcal{A}^\prime(T_k+1) $, and $ \mathcal{A}_0(T_k+1) $ is the set of all other active nodes in $ \mathcal{A}(T_k+1) $. 
			
			Consider the value of $ e_l(T_k+1) ~ \forall ~ l \in \mathcal{A}^\prime (T_k+1) $. It has not yet changed since $ t = T_{k} $. However, the actual value of $ \tilde{\Gamma}_l(T_k+1) $ has changed. Therefore, \textbf{part (a) of the lemma is false at $ t = T_k+1 $.}
			
			Similarly, consider the elements of $ \mathcal{E}_m(T_k+1)~ \forall ~ m \in \mathcal{A}^{\prime \prime}(T_k+1) $. Since each element in $ \mathcal{A}^{\prime \prime} $ has an active neighbor in $ \mathcal{A}^\prime $, there exists at least one $ (l, e_l) \in \mathcal{E}_m(T_k+1)  $ such that $ l \in \mathcal{A}^\prime $. Since this $ e_l $ does not represent the current value of $ \tilde{\Gamma}_l $ \textbf{therefore, part (b) of the lemma is also false at $ t = T_k+1 $.}
			
			Note that, the active neighbors of all nodes $ p \in \mathcal{A}_0(T_k+1) $ are in the set $ \mathcal{A}^{\prime \prime}(T_k+1) $. Since, nodes in $ \mathcal{A}^{\prime \prime}(T_k+1) $ are still in the active state, therefore, $ e_p = \tilde{\Gamma}_p(T_k+1) $ for all $ p \in \mathcal{A}_0(T_k+1) $. Furthermore, the neighbors of nodes in $ \mathcal{A}^{\prime \prime}(T_k+1) $ are in $ \mathcal{A}^{\prime}(T_k+1) $. Since all nodes in $ \mathcal{A}^{\prime}(T_k+1) $ are in active state, the number of active neighbors of nodes in $ \mathcal{A}^{\prime \prime}(T_k+1) $ has not changed. Therefore each element $ (m,e_m) \in \mathcal{E}_p(T_k+1) $ still accurately represents the active neighbors $ m $ of $ p $ and the number of active neighbors $ e_m $ of $ m $.  
			
			The number of messages of type $ h=\text{leaf} $ received by a node $ b \in \mathcal{B}(T_k+1) $ gives the number of its neighbors which were active at $ t=T_k $ but have transitioned to the leaf state at $ t=T_k+1 $ . Note that $ e_b(T_k) = \Gamma_b(T_k) $. Since $ e_b(T_k+1) $ is obtained by taking the difference between $ e_b(T_k) $ and the aforementioned number of messages therefore, it gives the number of neighbors of $ b $ which are still in the active state at $ t = T_k+1 $. In other words,
			\begin{equation}\label{eq:tilde_n_e_b_relation}
			e_b(T_k+1) = \tilde{\Gamma}_b(T_k+1)
			\end{equation}
			If $ e_b(T_k+1) = 0 $, $ b $ enters the residue state at $ t = T_k+1 $; otherwise, it sends a message $ m_{b,\text{count}} $ and enters the inactive state.
			
			Finally, since no message of type $ h = \text{count} $ or $ h = \text{reduce} $ are received by any node, we can say that $ e_i(T_k+1) = e_i(T_k) $ and $ \mathcal{E}_i(T_k+1) = \mathcal{E}_i(T_k) $. Therefore, no active nodes become leaf nodes. Hence,
			\begin{equation}\label{eq:A_change_1}
			\mathcal{A}(T_k+2) = \mathcal{A}(T_k+1)
			\end{equation}
			and
			\begin{equation}\label{eq:B_change_1}
			\mathcal{B}(T_k+2) = \emptyset.
			\end{equation}
			
                      \item We can assume two mutually exclusive cases: $ \mathcal{A}(T_k+1) = \emptyset $ or $ \mathcal{A}(T_k+1) \neq \emptyset $.

                        If we assume $ \mathcal{A}(T_k+1) = \emptyset $, from Eq.~\ref{eq:A_t_Partition}, $ \mathcal{A}(T_k+1) = \emptyset $ implies $ \mathcal{B}(T_k+2) = \emptyset $. Also from Eq.~\ref{eq:A_partition}, $ \mathcal{A}(T_k+1) = \emptyset $ implies $ \mathcal{A}^\prime(T_k+1) = \emptyset $. Therefore, from Eq.~\ref{eq:tilde_n_e_b_relation}, we have $ e_b(T_k+1) = 0 ~ \forall ~ b \in \mathcal{B}(T_k+1) $ which in turn implies $ m_i(T_k+2) = 0 $. Hence, $ \mathcal{A}(T_k+1) = \emptyset $ implies that the network is in resting state at $ T_{k+1} = T_k+2 $.
			From Eq.~\ref{eq:A_change_1} we have, $ \mathcal{A}(T_{k+1}) = \mathcal{A}(T_k+2) = \mathcal{A}(T_k+1) = \emptyset $ also implies that the hypothesis of the theorem is false at time $ t = T_{k+1} $. Therefore, the statement of the theorem is vacuously true.
			
			For the remaining part of the proof, we assume that,
			\begin{equation}\label{eq:assumption}
			\mathcal{A}(T_k+1) \neq \emptyset.
			\end{equation}
			
			\begin{enumerate}[label = (\roman*)]
				\item The messages $ m_{b,\text{count}} $ are received by all neighbors of $ b \in \mathcal{B}(T_k+1) $ at $ t = T_k+2 $. \textbf{This proves that the network is not in the resting state at $ t = T_k+2 $.} On receiving each such message, the active nodes $ l \in \mathcal{A}^\prime(T_k+1) $ remove the element $ (b,e_b) $ from $ \mathcal{E}_l(T_k+2) $ and decrease the value of $ e_l(T_k+2) $ by one. The total number of such messages received by $ l $ at $ t = T_k+2 $ is equal to reduction in $ \tilde{\Gamma}_l $ from $ t = T_k $ to $ t = T_k+1 $. Furthermore, since $ \mathcal{A}(T_k+2) = \mathcal{A}(T_k+1) $, therefore, $ \tilde{\Gamma}_l $ has not changed from $ t=T_k+1 $ to $ t=T_k+2 $. Therefore, $ e_l(T_k+2) = \tilde{\Gamma}_l(T_k+2) ~ \forall  l \in A^{\prime}(T_k+1) $. The $ e_{l^\prime}(T_k+2) $ of all other nodes $ l^\prime \in \mathcal{A}(T_k+3) \setminus A^{\prime}(T_k+1) $ gives the actual number of their active neighbors anyways. \textbf{Therefore, part (a) of the theorem is true at $ t = T_k+2 $}.
				
				However, the nodes $ m \in \mathcal{A}^{\prime \prime}(T_k+1) $ remain unaware of the changes in number of active neighbors of the nodes in $ \mathcal{A}^\prime(T_k+1) $. \textbf{Hence, part (b) of the theorem is still false at $ t=T_k+2 $} because the value of $ e_l $ stored in $ (l,e_l) \in \mathcal{E}_m(T_k+2) $ is still equal to $ e_l(T_k) $ and not to $ e_l(T_k+2) $. In order to notify the nodes $ m \in \mathcal{A}^{\prime \prime}(T_k+1) $ of the changes of $ e_l(T_k) $, nodes  $ l \in \mathcal{A}^\prime(T_k+2) $ send a message $ m_{l,\text{reduce}} $ for each $ m_{b,\text{count}} $ received.
				
				Finally, since any node $ i \in \mathcal{A}(T_k+2) $ which has changed $ e_i(T_k+1) $ has also received a message of type $ h=\text{count} $, it does not change its state. Any other node in $ \mathcal{A}(T_k+2) $ can also not change its state due to arguments similar to the ones presented in (A). Therefore,
				\begin{equation}\label{eq:A_change_2}
				\mathcal{A}(T_k+3) = \mathcal{A}(T_k+2)
				\end{equation}
				and
				\begin{equation}\label{eq:B_change_2}
				\mathcal{B}(T_k+3) = \emptyset.
				\end{equation}

				\item Eq.~\ref{eq:B_change_1} implies that there were no messages $ m_{b,\text{count}} $ sent at time $ T_k+2 $. Therefore, 
				\begin{equation}\label{eq:m_i_T_k+3}
				m_i(T_k+3) = 0 ~ \forall ~ i \in \mathcal{G}.
				\end{equation}
				\textbf{This, when combined with Eq.~\ref{eq:B_change_2} implies $ \mathcal{G} $ is in the resting state at $ T_k+3 $.}
				
				Eq.~\ref{eq:m_i_T_k+3} also implies that $ e_i(T_k+3) = e_i(T_k+2) ~ \forall ~ i \in \mathcal{G} $. \textbf{Therefore, part (a) of the theorem still holds at $ t=T_k+3 $.}
				
				Now, let us consider the messages $ m_{l,\text{reduce}} $ sent by all nodes $ l \in \mathcal{A}^\prime(T_k+1) $. On receiving each such message, the nodes $ m \in \mathcal{A}^{\prime\prime}(T_k+1) $ update $ (l,e_l) \in \mathcal{E}_m(T_k+2) $ to $ (l,e_l-1) $. Since the number of messages received equals the reduction in the degree of node $ l $, after receiving all the messages $ m_{l,\text{reduce}} $, the tuple $(l,e_l)$ gives the correct number $e_l$ of active neighbors of $ l $. \textbf{Therefore, part (b) of the theorem is satisfied.}
			\end{enumerate}
		\end{enumerate}
	\end{itemize}
	Hence, if we assume the Theorem~\ref{theorem:1} to be true at some $ t=T_k $, it is also true for another $ T_{k+1} > T_k $. We have $ T_{k+1} = T_k+2 $ if there are no active nodes left at $ T_k+1 $ (i.e. $\mathcal{A}(T_k+1) = \emptyset$); otherwise, if $\mathcal{A}(T_k+1) = \emptyset$, we have $ T_{k+1} = T_k+3 $. Since we already know that the theorem is true for $ t=T_0=0 $, it is true for all $ T_n $ by induction.
\end{proof}

\noindent
From Lemma~\ref{lemma:0} and Theorem~\ref{theorem:1}, we deduce the following results for resting times $ T_n $.

\begin{result}
	\label{result:2}
	For any $ T_n $,
	\begin{equation}\label{eq:T_k_evovution}
	T_{n+1} \le T_n + 3.
	\end{equation}
\end{result}

\begin{result}
	\label{result:3}
	If $ N_A(T_n)  = 0 $, then
	\begin{equation}\label{eq:N_A_non_evolution}
	N_A(T_{n+1}) = N_A(T_n) = 0.
	\end{equation}
	Instead, if $ N_A(T_n)  > 0 $, then
	\begin{equation}\label{eq:N_A_evolution}
	N_A(T_{n+1}) < N_A(T_n).
	\end{equation}
\end{result}
\begin{result}
	\label{result:4}
	If $ \mathcal{A}(T_k) \neq \emptyset $, then
	\begin{equation}\label{eq:A_subset}
	\mathcal{A}(T_{k+1}) = \mathcal{A}(T_{k}+1) \subset \mathcal{A}(T_k).
	\end{equation}
\end{result}

\noindent
It directly follows from Lemma~\ref{lemma:1} and Result~\ref{result:2} that,
\begin{equation}\label{eq:T_n_form}
T_n \le 3n \,.
\end{equation}
Additionally, the following corollary follows directly from Result~\ref{result:2} and Eq.~\ref{eq:T_n_form}.

\begin{corollary}
	\label{corollary:1}
	For a network $ \mathcal{G} $ of size $ N > 0 $ evolving under Algorithm~\Algorithm{}, there exists a time $ t_R \le 3N $ such that $ N_A(t_R) = 0 $.
\end{corollary}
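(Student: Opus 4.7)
The plan is to combine the monotonicity in Result~\ref{result:3} with the bound on the spacing between resting times from Eq.~\ref{eq:T_n_form}. The idea is simple: at each resting time the number of active nodes either is already zero or strictly drops, and resting times occur frequently enough (at most every three time steps) that the drop must be triggered no later than step $3N$.

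First I would establish the initial value $N_A(T_0) = N_A(0) = N$, which follows from Lemma~\ref{lemma:1} together with the fact that Line~\Init{} initializes every node to the active state, so all $N$ nodes lie in $\mathcal{A}(0)$. Next, I would consider the sequence $\left(N_A(T_n)\right)_{n \ge 0}$. By Result~\ref{result:3}, as long as $N_A(T_n) > 0$ we have $N_A(T_{n+1}) < N_A(T_n)$, and since $N_A(T_n)$ takes values in the non-negative integers, a strict decrease of one (or more) at every step forces
\begin{equation}
N_A(T_n) \le N - n
\end{equation}
for all $n$ up to and including the first index at which the count hits zero. In particular, setting $n = N$ shows that there exists a smallest $n^\star \le N$ with $N_A(T_{n^\star}) = 0$.

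Finally, I would invoke Eq.~\ref{eq:T_n_form}, which gives $T_{n^\star} \le 3 n^\star \le 3N$. Choosing $t_R = T_{n^\star}$ then yields both $N_A(t_R) = 0$ and $t_R \le 3N$, which is the desired conclusion.

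There is no genuine obstacle here: the corollary is a short bookkeeping argument on top of the machinery already built. The only subtle point worth flagging is that Result~\ref{result:3} guarantees strict decrease of $N_A$ across consecutive resting times rather than across consecutive time steps, which is precisely why the factor of $3$ (from Result~\ref{result:2}) enters the final bound; without the spacing bound $T_{n+1} - T_n \le 3$, the corollary would not follow even with monotonicity in hand.
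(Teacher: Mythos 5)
Your proposal is correct and uses exactly the same ingredients as the paper's proof: the strict integer decrease of $N_A$ across consecutive resting times (Result~\ref{result:3}) combined with the spacing bound $T_n \le 3n$ (Eq.~\ref{eq:T_n_form}), yielding $N_A(T_N) \le N_A(T_0) - N = 0$ and $T_N \le 3N$. The only difference is presentational — the paper argues by contradiction while you argue directly — so the two proofs are essentially identical.
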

\begin{proof}
	Let us assume that $ N_A(t) > 0 $ for all $ t \le 3N $. This implies,
	\begin{equation}
	N_A(3N) > 0
	\end{equation}
	Since $ N_A(t) $ is a monotonically decreasing function of time and $ T_N \le 3N $ (due to Eq.~\ref{eq:T_n_form}), we can say that,
	\begin{equation}\label{eq:Contradiction_a}
	N_A(T_N) > 0.
	\end{equation}

        \noindent
	Note that $ N_A(t) $ can change only is steps of one since it is a non-negative integer function. Therefore, using Eq.~\ref{eq:N_A_evolution},
	\begin{equation}\label{eq:Contradiction_aux}
	N_A(T_{N}) \le N_A(T_{N-i}) - i
	\end{equation}
	for any positive integer $ i \le N $. Setting $ i = N $, we get,
	\begin{equation}\label{eq:Contradiction_b}
	N_A(T_N) \le 0
	\end{equation}
	since $ T_0 = 0 $ and $ N_A(0) = N $ due to all nodes of the network being active at $ t=0 $. This is a direct contradiction to Eq.~\ref{eq:Contradiction_a}. Therefore, our assumption was wrong, hence proving the corollary.
\end{proof}

\begin{corollary}
	\label{corollary:2}
	The number of residue nodes $ N_C(t) \ge 0 $ only if $ t = T_n+2 $ for some $ n $ and $ N_A(T_n) > 0 $. Otherwise, $ N_C(t) = 0 $.
\end{corollary}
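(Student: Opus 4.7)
The plan is to prove the corollary via the contrapositive of its first clause: assume $N_C(t) > 0$ and deduce that $t = T_n + 2$ for some $n$ satisfying $N_A(T_n) > 0$. The backbone of the argument is the inclusion chain provided by Result~\ref{result:1}: every residue node at time $t$ was a leaf node at time $t-1$ (Eq.~\ref{eq:B_t_C_T_relation}) and every leaf node at time $t-1$ was an active node at time $t-2$ (Eq.~\ref{eq:A_t_Partition}). Thus, if $N_C(t) > 0$, then $N_B(t-1) > 0$ and $N_A(t-2) > 0$, so it remains only to locate $t-1$ relative to the sequence $(T_n)$ and to verify that the active population at that resting time is nonempty.

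The crucial step is to pinpoint exactly when the leaf-state population $N_B(\tau)$ can be nonzero. A node transitions from the active branch to the leaf branch at an iteration $\tau$ only when both clauses of Line~\LeafIdentificationCondition{} hold simultaneously: (i)~no message of type \emph{count} is received at $\tau$; and (ii)~its effective degree $e_i$ is the minimum over the neighbors recorded in $\mathcal{E}_i$. The proof of Theorem~\ref{theorem:1} already traced the algorithm's dynamics across each interval $[T_k, T_{k+1}]$ and showed that (i) and (ii) can both be satisfied within a cycle only at the resting time itself: at $T_k + 1$ clause~(ii) is violated because the $\mathcal{E}_i$ of every still-active node continues to record the just-transitioned leaves with their stale, and minimal, effective degrees; at $T_k + 2$ clause~(i) is violated, since the \emph{count} messages dispatched by the leaves arrive at the remaining active neighbors. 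Invoking this observation, $N_B(\tau) > 0$ forces $\tau - 1 = T_n$ for some $n$, hence $t - 1 = T_n + 1$ and $t = T_n + 2$.

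To finish I will verify that $N_A(T_n) > 0$ is also needed. If $N_A(T_n) = 0$, then by Result~\ref{result:3} no active nodes remain from $T_n$ onwards, so no active-to-leaf transition can occur at $T_n$ and $\mathcal{B}(T_n + 1) = \emptyset$; by Eq.~\ref{eq:B_t_C_T_relation} this forces $\mathcal{C}(T_n + 2) = \emptyset$, contradicting $N_C(t) > 0$. For completeness I will note that at iteration $T_n + 2$ every residue node executes the residue branch of Algorithm~\ref{al:Algorithm} and sets $s_i \gets I$, so $\mathcal{C}(T_n + 3) = \emptyset$; this gives the ``otherwise $N_C(t) = 0$'' clause for all times not of the form $T_n + 2$.

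The main obstacle is the middle paragraph: rigorously ruling out active-to-leaf transitions at $T_n + 1$ and $T_n + 2$. Fortunately, the relevant case analysis is already performed inside the proof of Theorem~\ref{theorem:1}, so this step reduces to pinpointing and citing the correct subcases rather than reproducing the full argument, and the corollary follows by chaining that fact with the elementary inclusions of Result~\ref{result:1}.
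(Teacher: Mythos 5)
Your proposal is correct and follows essentially the same route as the paper: both arguments rest on the facts established inside the proof of Theorem~\ref{theorem:1} (namely $N_B(T_k+1)>0$, $\mathcal{B}(T_k+2)=\emptyset$, $\mathcal{B}(T_k+3)=\emptyset$) to show leaves occur only at $T_n+1$, and then apply the containment $\mathcal{C}(t+1)\subseteq\mathcal{B}(t)$ of Result~\ref{result:1} together with the trivial $N_A(T_n)=0$ case. One small imprecision: at $t=T_k+2$ clause~(i) of Line~\LeafIdentificationCondition{} fails only for the nodes in $\mathcal{A}^\prime$ that actually receive \emph{count} messages, while the remaining active nodes are blocked because their unchanged $e_i$ and $\mathcal{E}_i$ still violate clause~(ii) — but since you defer to the case analysis in Theorem~\ref{theorem:1}, this does not affect the validity of the argument.
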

\begin{proof}
	It directly follows from Eqs.~\ref{eq:N_B}, \ref{eq:B_change_1} and \ref{eq:B_change_2} that if $ N_A(T_n) > 0 $, nodes in leaf state are present only at $ t=T_n+1 $. This, in turn, implies $ N_C(t) \ge 0 $ only if $ t = T_n+2 $ due to Eq.~\ref{eq:C_B_A_relation}. If there are no active nodes present at $ T_n $, then it is obvious that there would be no leaf or residue nodes in the future.
\end{proof}

\begin{definition}
	Let the cumulative residue set of a network $ \mathcal{G} $ at time $ t $ be defined as,
	\begin{equation}\label{eq:P_definition}
	\mathcal{P}(t) = \bigcup_{i=0}^{t} \mathcal{C}(i).
	\end{equation}
\end{definition}

\begin{definition}
  Let the cumulative residue index of a network $ \mathcal{G} $ at time $ t $ be defined as the number of elements in $ \mathcal{P}(t) $,
  \begin{equation}\label{eq:r_definition}
    r(t) = \left| \mathcal{P}(t) \right|.
  \end{equation}
\end{definition}
\noindent
Clearly, $ r(t) $ is a non-negative, non-decreasing integer function of $ t $. We now prove the following corollary regarding $ r(t) $.

\begin{corollary}
	\label{corollary:3}
	If $ T_R $ is the time when the network $ \mathcal{G} $ is the resting state, $ N_A(T_R) = 0 $, and $ N_A(T_{R-1}) > 0 $, then $ r(T_R) > 0 $.
\end{corollary}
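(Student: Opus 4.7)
The plan is to pin down the precise temporal structure between $T_{R-1}$ and $T_R$ and then show that every node that becomes a leaf in this short window is forced into the residue state. Once we know $\mathcal{C}(T_R) \neq \emptyset$, the inclusion $\mathcal{P}(T_R) \supseteq \mathcal{C}(T_R)$ immediately yields $r(T_R) > 0$.

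First I will establish that $\mathcal{B}(T_{R-1}+1) \neq \emptyset$. Since $\mathcal{G}$ is in resting state at $T_{R-1}$ with $N_A(T_{R-1}) > 0$, the argument used at the start of the inductive step in the proof of Theorem~\ref{theorem:1} applies directly: any $i \in \mathcal{A}(T_{R-1})$ that minimizes $e_i(T_{R-1})$ satisfies $m_i(T_{R-1}) = 0$ (resting state) together with the degree-minimality clause of Line~\LeafIdentificationCondition{}, and therefore sends the leaf message and transitions to the leaf state at $T_{R-1}+1$. Next I will rule out the subcase of Theorem~\ref{theorem:1} in which $\mathcal{A}(T_{R-1}+1) \neq \emptyset$: if it held, Eqs.~\ref{eq:A_change_1} and~\ref{eq:A_change_2} would give $\mathcal{A}(T_R) = \mathcal{A}(T_{R-1}+3) = \mathcal{A}(T_{R-1}+1) \neq \emptyset$, contradicting the hypothesis $N_A(T_R) = 0$. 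Hence the alternative subcase must apply, so $\mathcal{A}(T_{R-1}+1) = \emptyset$ and $T_R = T_{R-1}+2$.

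With this structure fixed, every leaf $b \in \mathcal{B}(T_{R-1}+1)$ has no active neighbor at $T_{R-1}+1$, so $\tilde{\Gamma}_b(T_{R-1}+1) = 0$, and Eq.~\ref{eq:tilde_n_e_b_relation} therefore gives $e_b(T_{R-1}+1) = 0$. The condition on Line~\StartTransitionCondition{} then forces $b$ to change state to $R$ at $t = T_{R-1}+2 = T_R$. Consequently $\mathcal{C}(T_R) \supseteq \mathcal{B}(T_{R-1}+1) \neq \emptyset$, and hence $r(T_R) = |\mathcal{P}(T_R)| \geq |\mathcal{C}(T_R)| > 0$.

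The main obstacle is the structural case split in the middle step: eliminating the branch in which active nodes survive at $T_{R-1}+1$ is what ties the assumed disappearance of active nodes by $T_R$ to the existence of a residue node. Once that branch is excluded via Eqs.~\ref{eq:A_change_1}--\ref{eq:A_change_2}, the remainder is bookkeeping through Eq.~\ref{eq:tilde_n_e_b_relation} and the leaf-to-residue transition rule.
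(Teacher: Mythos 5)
Your proof is correct and follows essentially the same route as the paper's: deduce that $\mathcal{A}(T_{R-1}+1)=\emptyset$ (you do this by contradiction via the case split of Theorem~\theoremOne{}, the paper invokes Result~4, which packages the same fact), conclude via Eq.~\ref{eq:tilde_n_e_b_relation} that every leaf at $T_{R-1}+1$ has $e_b=0$ and so becomes a residue at $T_R=T_{R-1}+2$, giving $\mathcal{C}(T_R)\neq\emptyset$ and hence $r(T_R)>0$. The only cosmetic difference is that you skip the paper's (unnecessary) preliminary split on whether $r(T_{R-1})>0$.
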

\begin{proof}
	If $ r(T_{R-1}) > 0 $, then the corollary is trivially proved. Therefore, let us consider the case when $ r(T_{R-1}) = 0 $. Since $ N_A(T_{R-1}) > 0 $, we can use Result~\ref{result:4} to get $ \mathcal{A}(T_{R-1}+1) = \mathcal{A}(T_R) = \emptyset $. This in turn implies $ \mathcal{B}(T_{R-1}+1) = \mathcal{A}(T_{R-1}) $ due to Eq.~\ref{eq:A_t_Partition}.
	This leads to the situation where, at $ t=T_{R-1}+1 $ none of the leaf nodes have any active neighbors. Therefore, due to Eq.~\ref{eq:tilde_n_e_b_relation},
	\begin{equation}\label{eq:final_leaves}
	e_b(T_{R-1}+1) = 0 ~~~ \forall b \in \mathcal{B}(T_{R-1}+1).
	\end{equation}

\noindent
	Hence, the condition in Line~\StartTransitionCondition{} of the Algorithm~\Algorithm{} is satisfied for all nodes $ b \in \mathcal{B}(T_{R-1}+1) $, which become residue nodes. Thus, from Theorem \ref{theorem:1}, because $N_A(T_R) = 0$, we have
	\begin{equation}\label{eq:Last_step}
	\mathcal{C}(T_{R-1}+2) = \mathcal{C}(T_{R}) = \mathcal{B}(T_{R-1}+1) = \mathcal{A}(T_{R-1})
	\end{equation}
	Hence, $ \mathcal{C}(T_R) \neq \emptyset $ and therefore $ r(T_R) > 0 $.
\end{proof}

\begin{definition}
	Let the overall count $ I(t) $ be defined as,
	\begin{equation}\label{eq:I_definition}
	I(t) = \sum_{i \in \mathcal{Q}(t)} c_i(t)
	\end{equation}
	where $ \mathcal{Q}(t) = \mathcal{A}(t) \cup \mathcal{P}(t) $.
\end{definition}

\begin{theorem}
	\label{theorem:2}
	For any time $ T_n $ when $ \mathcal{G} $ is in resting state,
	\begin{equation}\label{eq:Invariant}
	I(T_n) = N
	\end{equation}
	when $ N $ is the size of $ \mathcal{G} $.
\end{theorem}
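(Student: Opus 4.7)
The plan is to prove Theorem~\ref{theorem:2} by induction on the index $n$ of the resting time. For the base case $n=0$, every node is initialized with $c_i = 1$ in the active state, so $\mathcal{A}(0) = \mathcal{V}$, $\mathcal{P}(0) = \emptyset$, and $I(T_0) = I(0) = N$.

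For the inductive step, assume $I(T_n) = N$. If $\mathcal{A}(T_n) = \emptyset$, then by Lemma~\ref{lemma:0} we have $T_{n+1} = T_n+1$, no state transitions or count messages occur, and both $\mathcal{Q}$ and every $c_i$ are unchanged, so the claim is trivial. In the non-trivial case, I would decompose what happens between $T_n$ and $T_{n+1}$. By Result~\ref{result:1}(a), $\mathcal{A}(T_n) = \mathcal{A}(T_n+1) \cup \mathcal{B}(T_n+1)$, and from the proof of Theorem~\ref{theorem:1} this active set remains constant thereafter, so $\mathcal{A}(T_n+1) = \mathcal{A}(T_{n+1})$. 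I would further split $\mathcal{B}(T_n+1)$ into $\mathcal{B}^R := \mathcal{C}(T_n+2)$ (the leaves with $e_b(T_n+1)=0$, which become residues) and $\mathcal{B}^I := \mathcal{B}(T_n+1) \setminus \mathcal{B}^R$ (the leaves that transmit a count message and go inactive). Because $\mathcal{P}$ is cumulative and $\mathcal{P}(T_{n+1}) = \mathcal{P}(T_n) \cup \mathcal{B}^R$, the four pieces $\mathcal{A}(T_{n+1})$, $\mathcal{B}^I$, $\mathcal{B}^R$, $\mathcal{P}(T_n)$ form a disjoint decomposition of $\mathcal{Q}(T_n)$, yielding $\mathcal{Q}(T_n) \setminus \mathcal{Q}(T_{n+1}) = \mathcal{B}^I$ and $\mathcal{Q}(T_{n+1}) \setminus \mathcal{Q}(T_n) = \emptyset$.

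The conservation step then reduces to a flow identity for the local counters. By inspection of Algorithm~\Algorithm{}, $c_i$ changes only when an active node processes a count message; leaf and residue nodes never modify $c_i$, and the old residues in $\mathcal{P}(T_n)$ are already inactive. Each $b \in \mathcal{B}^I$ sends a single message $m_{b,\text{count}} = c_b/e_b$ at $t = T_n+1$, which is processed at $t = T_n+2$ by exactly its neighbors still in the active state. Here Theorem~\ref{theorem:1} is essential: it guarantees $e_b(T_n+1) = \tilde{\Gamma}_b(T_n+1)$, so $e_b$ equals the true number of active neighbors of $b$, and combined with $\mathcal{A}(T_n+2) = \mathcal{A}(T_{n+1})$ this ensures that exactly $e_b$ recipients each receive $c_b/e_b$. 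Summing $e_b \cdot (c_b/e_b) = c_b$ over $b \in \mathcal{B}^I$, the total count injected into nodes of $\mathcal{A}(T_{n+1})$ matches exactly the total count $\sum_{b \in \mathcal{B}^I} c_b$ that is lost when $\mathcal{B}^I$ leaves $\mathcal{Q}$; the nodes in $\mathcal{B}^R$ remain in $\mathcal{Q}$ as new residues with $c_i$ untouched. Hence the net change in $I$ is zero and $I(T_{n+1}) = N$. The main obstacle is the careful bookkeeping across the two or three intermediate iterations: tracking which $c_i$ change at which step, and matching the fractional $1/e_b$ split of each count message to exactly $e_b$ active recipients, both of which rely crucially on the invariants of Theorem~\ref{theorem:1}.
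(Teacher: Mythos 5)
Your proposal is correct and follows essentially the same route as the paper's proof: induction over resting times, the base case $I(0)=N$ from $c_i=1$, and the conservation step in which each inactive-bound leaf $b$ splits $c_b$ into $e_b$ equal shares received by exactly its $e_b$ active neighbors (using $e_b(T_n+1)=\tilde{\Gamma}_b(T_n+1)$ from Theorem~\theoremOne{}), so that the count lost by $\mathcal{B}^I$ leaving $\mathcal{Q}$ is exactly absorbed by $\mathcal{A}(T_{n+1})$. Your set bookkeeping is a slightly cleaner restatement of the paper's decomposition into $\mathcal{A}_0$, $\mathcal{A}^\prime$, $\mathcal{B}^\prime$, and $\mathcal{C}$, but the argument is the same.
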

\begin{proof}
	
	We prove this by induction. At $ t=T_0=0 $, the set $ \mathcal{A}(t) $ contains all nodes of the network. Since $ c_i(t) $ is initialized as $ c_i(0)=1 ~ \forall ~ i \in \mathcal{G} $, therefore, $ \mathcal{Q}(T_0) = \mathcal{Q}(0) = N $.
	
	Now, let us assume that the theorem is true for some time $ t=T_k $ such that $ \mathcal{A}(T_k) \neq \emptyset $. Now, by definition,
	\begin{equation}
	\mathcal{Q}(T_k) = \mathcal{A}(T_k) \cup \mathcal{P}(T_k).
	\end{equation}
	Now, since $ \mathcal{A}(T_k) $ and $ \mathcal{P}(T_k) $ are disjoint sets, we have,
	\begin{equation}\label{eq:Induction_a}
	I(T_k) = \sum_{a \in \mathcal{A}(T_k)} c_a(T_k) + \sum_{p \in \mathcal{P}(T_k)} c_p(T_k) = N.
	\end{equation}
	
	Let us now consider the network at $ t=T_{k+1} $. From Eqs.~\ref{eq:A_t_Partition} and \ref{eq:A_subset} we have,
	\begin{equation}\label{eq:Induction_b}
	\mathcal{A}(T_k) = \mathcal{A}(T_k+1) = \mathcal{A}(T_k) \setminus \mathcal{B}(T_k+1).
	\end{equation}
	Also due to Corollary~\ref{corollary:3}, we have,
	\begin{equation}
	\mathcal{P}(T_{k+1}) = \mathcal{P}(T_k) \cup \mathcal{C}(T_k+2).
	\end{equation}
	Therefore,
	\begin{equation}
	\mathcal{Q}(T_{k+1}) = \mathcal{A}(T_{k+1}) \cup \mathcal{P}(T_{k+1}) = \left[ \mathcal{A}(T_k) \setminus \mathcal{B}(T_k+1) \right] \cup \left[ \mathcal{C}(T_k+2) \cup \mathcal{P}(T_k) \right].
	\end{equation}
	Hence,
	\begin{equation}
	\mathcal{Q}(T_{k+1}) = \left[ \left( \mathcal{A}_0(T_k) \cup \mathcal{A}^\prime(T_k+1) \right) \setminus \left( \mathcal{B}^\prime(T_k+1) \cup \mathcal{C}(T_k+2) \right) \right] \cup \left[ \mathcal{C}(T_k+2) \cup \mathcal{P}(T_k) \right]
	\end{equation}
	where, $ \mathcal{A}^\prime(T_k+1) $ is the set of active neighbors at $ t=T_k+1 $, $ \mathcal{A}_0(T_k) $ is the set of the rest of the nodes in $ \mathcal{A}(T_k) $ and $ \mathcal{B}^\prime(T_k+1) $ is the set of nodes in $ \mathcal{B}(T_k+1) $ which did not become residue nodes.
	
	Now $ I(T_{k+1}) $ is the sum of $ c_i(T_{k+1}) $ for all nodes $ i \in \mathcal{Q}(T_{k+1}) $. Noting that the fourth and the fifth sets in the right hand side of the previous equations are identical, we have,
	\begin{equation}
	I (T_{k+1}) = \sum_{a_0 \in \mathcal{A}_0(T_k)} c_{a_0}(T_{k+1}) + \sum_{a^\prime \in \mathcal{A}^\prime(T_k+1)} c_{a^\prime}(T_{k+1}) - \sum_{b^\prime \in \mathcal{B}^\prime(T_k+1)} c_{b^\prime}(T_{k+1}) + \sum_{p \in \mathcal{P}(T_k)} c_{p}(T_{k+1}).
	\end{equation}
	Since $ c_i(t) $ does not change unless a message of type $ h=\text{count} $ is received, we have,
	\begin{equation}
	I(T_{k+1}) = \sum_{a_0 \in \mathcal{A}_0(T_k)} c_{a_0}(T_{k}) + \sum_{a^\prime \in \mathcal{A}^\prime(T_k+1)} \left( c_{a^\prime}(T_{k}) + \delta_{a^\prime} \right) - \sum_{b^\prime \in \mathcal{B}^\prime(T_k+1)} c_{b^\prime}(T_{k}) + \sum_{p \in \mathcal{P}(T_k)} c_{p}(T_{k})
	\end{equation}
	which can be simplified using Eq.~\ref{eq:Induction_a} into,
	\begin{equation}
	I(T_{k+1}) = I(T_k) + \left( \sum_{a^\prime \in \mathcal{A}^\prime(T_k+1)} \delta_{a^\prime} - \sum_{b^\prime \in \mathcal{B}^\prime(T_k+1)} c_{b^\prime}(T_k) \right).
	\end{equation}
	
	Now,
	\begin{align}\label{eq:proof_a}
	\sum_{a^\prime \in \mathcal{A}^\prime(T_k+1)} \delta_{a^\prime} &=
		\sum_{\substack{a^\prime \in \mathcal{A}^\prime(T_k+1) \\ b^\prime \in \mathcal{B}^\prime(T_k+1)\\ a^\prime \text{ is connected to } b^\prime}} \frac{c_{b^\prime} (T_k+1)}{e_{b^\prime}(T_k+1)} \\
		~&~ \nonumber\\
		~ &= 
	\sum_{b^\prime \in \mathcal{B}^\prime(T_k+1)} \tilde{n}_{b^\prime} \left( \frac{c_{b^\prime} (T_k)}{e_{b^\prime}(T_k+1)} \right) \\
		~&~ \nonumber\\
		~ &=
	\sum_{b^\prime \in \mathcal{B}^\prime(T_k+1)} c_{b^\prime}(T_k)
	\end{align}
	due to Eq.~\ref{eq:tilde_n_e_b_relation}. Therefore,
	\begin{equation}\label{eq:Induction_c}
	I(T_{k+1}) = I(T_k) = N.
	\end{equation}
	This proves that the theorem is true for all $ T_n \le T_R $ such that $ \mathcal{A}(T_{R-1}) \neq \emptyset $ and $ \mathcal{A}(T_R) = \emptyset $. For any $ tT_n >T_R $, the overall count $ I(T_n) $ still remains invariant, since $ \mathcal{A}(T_n) $ continues to be an empty set and hence $ \mathcal{P}(T_n) $ does not change further.
\end{proof}

\begin{theorem}
	\label{theorem:3}
	There exists some time $ t_{\text{max}} \le 4N+1 $ such that for all $ t \ge t_0 $,
	\begin{equation}\label{key}
	n_i(t) = N ~~~ \forall ~ i \in \mathcal{G}.
	\end{equation}
\end{theorem}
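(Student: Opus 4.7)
The plan is to combine Corollary~\ref{corollary:1}, Theorem~\ref{theorem:2}, and Corollary~\ref{corollary:3} with a straightforward analysis of how the broadcast messages propagate. First I would invoke Corollary~\ref{corollary:1} to obtain a time $t_R \le 3N$ at which $N_A(t_R) = 0$. Using the structure established in Theorem~\ref{theorem:1}, one can take $t_R$ to be of the form $T_R$ (a resting time), after which Result~\ref{result:3} guarantees $\mathcal{A}(t) = \emptyset$ for all $t \ge T_R$. Corollary~\ref{corollary:2} then ensures that no further leaf or residue nodes can be created after $T_R+2$, so every node that will ever enter the residue state has done so by that time.

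Second, I would apply Theorem~\ref{theorem:2} at $t=T_R$. Since $\mathcal{A}(T_R)=\emptyset$, we have $\mathcal{Q}(T_R)=\mathcal{P}(T_R)$, and Corollary~\ref{corollary:3} guarantees this set is non-empty whenever $N>0$. Hence
\begin{equation}
\sum_{k \in \mathcal{P}(T_R)} c_k(T_R) \;=\; I(T_R) \;=\; N.
\end{equation}
The strategy is then to argue that every node eventually receives exactly one broadcast message from each $k \in \mathcal{P}(T_R)$ and, upon first receipt, adds $c_k$ to its counter $n_i$.

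Third, I would handle the broadcast propagation phase. When a node $k$ enters the residue state at time $t_k^R \le T_R+2$, it executes Lines~32--34 of Algorithm~\ref{al:Algorithm}: it initializes $\mathcal{R}_k\leftarrow\{k\}$, $n_k\leftarrow c_k$, and sends $m_{k,\text{broadcast}}=(k,c_k)$. By the rebroadcast rule (Lines~40--43), any node receiving this message for the first time updates $n_i\leftarrow n_i+c_k$, inserts $k$ into $\mathcal{R}_i$, and forwards the message exactly once. Because $\mathcal{G}$ is connected with $N$ nodes, its diameter is at most $N-1$, so the message $(k,c_k)$ reaches every node within $N-1$ steps of $t_k^R$. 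Since each broadcast for a given $k$ contributes to $n_i$ at most once (guarded by the check $k\notin\mathcal{R}_i$ on Line~\ref{st:residue_check}), the final value of $n_i$ is exactly $\sum_{k\in\mathcal{P}(T_R)} c_k(T_R) = N$.

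Combining the bounds, the last residue broadcast is emitted no later than $T_R+2 \le 3N+2$, and takes at most $N-1$ further steps to reach every node, giving $t_{\text{max}} \le (3N+2)+(N-1) = 4N+1$. The main obstacle I expect is bookkeeping: carefully justifying that the residue set is fixed by time $T_R+2$ and that each individual broadcast message truly reaches every node without being suppressed prematurely. The latter follows because a node only refuses to relay $(k,c_k)$ if $k\in\mathcal{R}_i$, which is precisely the condition that it has already propagated that particular message; thus on the shortest path from $k$ to any node $i$, each intermediate hop forwards the message at the first time it sees it, so the message traverses the whole network within the diameter. Finally, monotonicity of $n_i$ (it never decreases after reaching $N$) extends the conclusion to all $t \ge t_{\text{max}}$.
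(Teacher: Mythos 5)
Your proposal is correct and follows essentially the same route as the paper's own proof: Corollary~\corollaryOne{} to bound the end of the reduction phase by $3N$ (residues fixed by $T_R+2$), Theorem~\theoremTwo{} to identify $\sum_{k\in\mathcal{P}(T_R)}c_k$ with $N$, and the diameter bound $N-1$ for the flooding of broadcast messages, yielding $4N+1$. Your write-up is in fact somewhat more careful than the paper's, which leaves the flooding/no-double-counting argument and the final arithmetic largely implicit.
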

\begin{proof}
	We know from Corollary~\ref{corollary:1} that for some minimum time $ t_r \le 3N $, the number of active nodes in the network $ \mathcal{G} $ becomes zero. Therefore $ \mathcal{A}(t_r) = 0 $. Hence, by $ t = T_{R+1} = t_r+2 $ all messages of the form $ m_{p,\text{broadcast}} $ have been sent by all nodes $ p \in \mathcal{P}(T_{R+1}) $. The time required by any such message to reach any other node of the network is $ t_b \le N-1 $. Therefore, by using Theorem~\ref{theorem:2}, we can say that, at time $ t \ge t_r+t_b \le 4N+1 $, the final count variable for each node $ i \in \mathcal{G} $ is,
	\begin{equation}\label{eq:final_proof}
	n_i(t) = \sum_{p \in \mathcal{P}(t)} c_p(t) = I(t) = N.
	\end{equation}
	
\end{proof}

\newpage

\section{Complexity Analysis}
\label{sec:Appendix}

In this section, we provide the details of the efficiency of the AnB
and the ST algorithms with respect to time, communication, and memory costs.

\subsection{Time Cost}
\label{subsec:Time_cost}

It is difficult to make theoretical estimates about the number of time steps that it takes for the AnB algorithm to work. This is because the number of nodes getting changing states from active to leaf state at each time step depends on the topology of the network. However, the topology of the `remaining' network evolves as the algorithm progresses as a result of nodes entering into the inactive state. Therefore, while it is possible to estimate the fraction of nodes which get eliminated at the first iteration of the algorithm, estimating the fraction in all subsequent iterations of network reduction is difficult due to difficulties in gauging the changes in the degree distribution and topology of the remaining network. Nevertheless, we can obtain insights into the time costs of the AnB algorithm by analyzing the results of numerical simulations.

The AnB algorithm can be divided into two distinct phases from the point of view of the network: (a) The network reduction phase: where there are active or leaf nodes still present in the network and the information about the size of the network is being concentrated into a few residue nodes; and (b) The broadcast phase: where no active or leaf nodes are present in the network and the concentrated information is broadcast to all other nodes of the network. Furthermore, since the algorithm successively eliminates nodes with low degrees, the residue nodes which remain after the elimination process are more probable to be nodes with a high degree. In other words, the number of iterations taken in the `active' and `inactive' phases depends on the distribution of high/low degree nodes which are\textbf{} determined by the network topology.

Consider a network with a heterogeneous degree distribution such as the Barab\'asi-Albert network, which has few nodes with extremely high degrees and numerous nodes with low degrees. This skewness in degree distribution means that most nodes with low degrees get eliminated without going through the residue state, whereas the few nodes with high degrees become residues. This results in the network having few residue nodes which are well-connected. This makes the broadcast of the counts in residue nodes more efficient. In contrast, consider a network with approximately homogeneous distribution such as the Random Geometric network. Here, since the degrees of all nodes are approximately the same, there is a high probability that there will be a greater number of nodes which go through the residue state.

Having few well connected nodes reduces the number of iterations required in the broadcast phase as the number of residue messages to be sent across the network is low and can be broadcast faster due to high connectivity. That is why networks with more heterogeneous degree distribution, such as the Erd\"os-R\'enyi, Barab\'asi-Albert and Watts-Strogatz networks, spend a lower fraction of time in the broadcast phase. In contrast, the Random Geometric network with a more homogeneous degree distribution spends a higher proportion of iterations in the broadcast phase.

Our numerical simulations confirm our insights and show a lower convergence time for heterogeneous networks (see Figure~\Figure{} top panel). Our results also show that the time spent in the network reduction phase is also significantly lower in networks with heterogeneous degree distribution. This is probably due to the fact that an heterogeneous degree distribution allows for a greater number of nodes to be eliminated in one iteration.

Finally, the numerical simulations performed on the different types of
random networks (see Figure~\Figure{} top panel) reveal that
the time taken by the algorithm scales better than $ \log N $ for
Erd\"os-Reny\'i, Barabas\'i-Albert and Watts-Strogatz networks. This
is evident from the sub-linear nature of the plots. On the Random
Geometric networks, the algorithm scales worse than $ \log N
$. Instead, as the broadcast-time required for the network increases
relative to the network reduction-time for large networks, the total
time required scales as a power of network size $ N $.


\subsubsection{Time Cost of the ST Algorithm}

The time cost of the ST algorithm when the single node $i$ computes
the size of the network depends on the topological location of node
$i$. When all nodes need to compute the ST algorithm the time
necessary is exactly $2D$, where $D$ is the network diameter. This is
the time necessary to let a message go back and forth throughout the
entire network. Therefore the asymptotic worst-case time complexity
for the ST algorithm is $\mathcal{O}(D)$.

\subsection{Communication Cost}
\label{subsec:Communication_cost}

To assess the communication cost we compute the expected number of messages to reach convergence, \emph{i.e.} all nodes have the variable $ n_i $ equal to the network size. In the proposed algorithm the nodes send various types of messages at various stages of the algorithm. Note that, we assume each message to be `broadcast' to the neighbors rather than multicast. Therefore, whenever a message is sent from a node to all its neighbors, we count it as a single message. We count the number of messages sent at each stage of the algorithm as follows.

\begin{itemize}
	
	\item \textbf{Initial count of all neighbors (message type $ h = \text{echo} $):} In the initiation phase of the algorithm, each node announces its presence to all its neighbors so that each node becomes aware of its neighborhood. This message is sent once by each node in the network. Therefore, the total number of messages sent in this stage is $M_1 = N$.
	
	\item \textbf{Broadcasting the initial number of neighbors (message type $ h = \text{degree} $):} Each node then broadcasts its number of neighbors. Since this message is also sent once by each node of the network, the total number of messages sent in this stage is $M_2 = N$.
	
	\item \textbf{Declaring transition to leaf state (message type $ h = \text{leaf} $):} When each node changes its state to $s_i = L$, it broadcasts a message so that any neighboring node in leaf state may update its effective degree. Since the transition from $s_i = A$ to $s_i = L$ is made once by each node, the total number of messages sent in this stage is $M_3 = N$.
	
	\item \textbf{Declaring transition to inactive state (message type $ h = \text{count} $):} Having a network with $ r $ residue nodes, the transition from $s_i = L$ to $s_i = I$ is made by $N - r$ nodes. Therefore, the number of  messages sent in this stage is $M_4 = N - r$.
	
	\item \textbf{Updating effective degree (message type $ h = \text{reduce} $):} When a node $i$ in state $s_i = A$ receives a message informing the transition of node $j$ from state $s_j = L$ to $s_j = B$, its effective degree $ e_i $ changes. Then, node $i$ has to send its updated effective degree. While the exact number of messages sent informing the changes in effective degree depends on the topology of the network, we can compute its upper bound to be $M_5 = Nd$, where $d$ is the average degree of the nodes in the network. The reasoning behind this result is as follows. 
	
	
	Consider a node $j$ which is transitioning from $s_j = L$ to $s_j = I$. This will lead to a change in effective degree of all its neighbors. Out of these neighbors, only the nodes in active state send a message informing the change in effective degree. Therefore the number of update messages sent is equal to the number of edges between the node $j$ and its active neighbors. It also follows that no update message would be sent by the node $j$ after it has transitioned to the inactive state. Therefore, for the purpose of counting the number of update messages, we must iteratively `remove' the nodes which have transitioned to the inactive state along with all their edges. The total number of edges removed thus would give the upper bound of the number of update messages sent. Since the maximum number of edges in the network is $Nd$, therefore the maximum number of update messages is $M_5 = Nd$.
	
	\item \textbf{Broadcasting messages from the residue nodes (message type $ h = \text{broadcast} $):} In the final stage of the algorithm, each node in the residue state creates a broadcast message which is then broadcast throughout the network. If there are $r$ nodes reaching the residue state, the number of messages sent in this stage is $M_6 = Nr$.
	
\end{itemize}

The upper bound of the total number of messages sent in the entire duration of the algorithm is, therefore,
\begin{equation}
M = \sum_{k = 1}^6 M_k = 4N - Nx + Nd + N^2x
\end{equation}
where,
\begin{equation}
x = \frac{r}{N}
\end{equation}
is the fraction of residue nodes in the network.

In the All-2-All broadcast method for node counting, each of the $N$ nodes sends its id to all the $N$ nodes of the network. Therefore, the total number of messages sent in the algorithm is $M_0 = N^2$.

Comparing the two algorithms in terms of the communication costs, we can say that the proposed algorithm is better if
\begin{equation*}
4N - Nx + Nd + N^2x < N^2 
\end{equation*}
or equivalently if
\begin{equation}
d < N(1-x) + x - 4.
\label{eq:communications_condition}
\end{equation}

In other words, the proposed algorithm is better than the All-2-All broadcast method if the average degree of the nodes is less than the threshold on the right-hand side of Eq.~\ref{eq:communications_condition}. This threshold depends on the size of the network $ N $ and the fraction of residue nodes $ x $. Since $ 0 < x \leq 1 $, we now analyze Eq.~\ref{eq:communications_condition} in the limiting cases.

If $ x = 1 $, all the nodes of the network have gone through the residue state. This occurs in the special case when all nodes of the network have the same degree (in other words, we have a regular network). In such a limiting case, Eq.~\ref{eq:communications_condition} reduces to $ d < -3 $, which is impossible. Therefore, in regular networks of any size the direct broadcast method is better than the proposed algorithm in terms of communication costs. On the other hand, if $ x \rightarrow 0 $, Eq.~\ref{eq:communications_condition} becomes $ d < N - 4 $. Since in a typical network, the average degree of a node is much less than the number of nodes, we can say that the proposed method is better for almost any network where the fraction of residues is close to zero.

When applied to random graphs of aforementioned types in simulations, we observe that the fraction of residue nodes $ x $ comes out to be close to zero for sufficiently large networks (see Figure~\Figure{} bottom panel). For instance, Barab\'asi Albert, Erd\"os R\'enyi and Watts-Strogatz networks of size 10,000 have the fraction of residue nodes $ x $ below 0.02. This implies that the threshold average degree $ d $ for which the proposed algorithm outperforms the All-2-All broadcast method is $ d > 9796 $. Even for a Random Geometric network, where $ x \approx 0.07 $, the threshold is approximately $d \approx 9300 $. Since the approximate average degrees for the networks is approximately $ d \approx 20 $ to $ 30 $, which is significantly less than the indicated thresholds, we can conclude that the proposed algorithm has significantly lower communication costs than the All-2-All broadcast method. Moreover, since Figure~\Figure{} indicates that $ x $ remains constant, or decreases, as the network size increases, we expect that the proposed algorithm would be even more efficient for larger networks.

\subsubsection{Communication Cost of the ST Algorithm}

The number of messages sent across the network is at least $2N^2$.
More specifically, each node sends at least two messages for each
query it receives: one to establish the hierarchy during the tree
construction phase and one to send towards the root the count of the
nodes. Since each node computes the ST algorithm independently, each
node receives $N$ queries, and therefore it sends at least $2N$
messages. We can thus derive that the total number of messages sent
across the network is at least $2N^2$.
The asymptotic worst-case complexity in terms of communication for the
ST algorithm is therefore $\mathcal{O}(N^2)$.

\subsection{Memory Cost}
\label{subsec:Memory_cost}

Throughout the execution of the algorithm, each node keeps track of a number of internal variables. The state variable $ s_i $ can take one of four different values and therefore, has memory requirements independent of the network properties. Since these memory requirements are relatively small, we ignore them in the further analysis. 

The variables $ c_i $, $ n_i $ and $ e_i $ can be numbers up to and including $ N $. Hence, the memory requirement for each of them is proportional to $ \log N $. Similarly, the memory requirement for any message variable $ m_{i,h} $ can scale as $ 2 \log N $ in the worst cast scenario since $ m_{i,h} $ is either a single number or a tuple containing 2 numbers which are all bounded above by $ N $.

However, the variables $ \mathcal{N}_i $, $ \mathcal{E}_i $ and $ \mathcal{R}_i $ are sets whose memory requirements are much larger than the previously mentioned single valued variables. The number of elements in $ \mathcal{N}_i $ is the degree of the node $ d_i $. Since each of the elements is the index of a node, the memory required to store each element is proportional to $ \log N $. Therefore, the memory requirement for the set is, $ M(\mathcal{N}_i) \sim 2 d_i \log N $. The maximum memory requirement for $ \mathcal{E}_i $ is exactly two times that of $ \mathcal{N}_i $ because the initial number of elements in $ \mathcal{E}_i $ and each element is a tuple of two numbers, each of which requires memory proportional to $ \log N $. Therefore, $ M(\mathcal{E}_i) \sim d_i \log N $. Note that, since $ \mathcal{N}_i $ is used only to construct the elements of $ \mathcal{E}_i $, the memory used for storing $ \mathcal{N}_i $ can simply be expanded to store $ \mathcal{E}_i $.  The elements of $ \mathcal{R}_i $ are also tuples whose memory requirements are similar to those in $ \mathcal{E}_i $. However, the number of such tuples in $ \mathcal{R}_i $ is equal to the number of residues $ r $ in the network, Hence, $ M(\mathcal{R}_i) \sim r \log N $. Therefore, the maximum memory requirement for each node $ i $ scales as,
\begin{equation}\label{eq:memory_costs}
M_i \sim \left( 2 d_i + r + 5 \right) \log N
\end{equation}
for large $ N $. In comparison, if the number of nodes is computed using the All-2-All broadcast method, each node requires memory that scales as $ N \log N $ as it needs to keep track of indices of every other node of the network. In the worst case scenario, the degree of each node of the network can be $ d_i = N-1 $ for all nodes which also implies that all nodes become residue nodes, yielding $ r = N $. In such a case, the memory requirement for the proposed algorithm scales as $ 3 N \log N $ for large $ N $ which is clearly worse than the All-2-All broadcast method. However, our numerical simulations involving much general classes of random networks show that $ r $ is at least one order of magnitude smaller than $ N $ (see Figure ~\Figure{} bottom panel) and the degree of each node is approximately the same (due to the parameters chosen in Table~\ref{tab:network_types}) for sufficiently large networks. Therefore, we can say that the proposed algorithm is better than the All-2-All broadcast method for sufficiently large networks.

\subsubsection{Memory Cost of the ST Algorithm}

For each query received, each node has to keep track of the id of the
querying node and of the id of its parent for the corresponding query.
Because there are $N$ queries and storing an id requires at least
$\log (N)$ bits, the memory required by each node to keep track of its
parents is at least $2N \log (N)$. Additionally, each node has to
ensure that it receives messages from all its neighbors for each
query. This requires an additional $d_i N$ bits, where $d_i$ is the
degree the node $i$. Therefore, the total memory required by the
generic node $i$ is $ 2N \log (N) + d_i N $.

\begin{table}
 \centering
 \caption{Asymptotic worst-case complexity for the AnB, the All-2-All,
   and the Single Tree (ST) algorithms in terms of time,
   communication, and memory. In the worst-case the three algorithms
   are comparable in every aspect (except for ST's memory) however,
   the precise memory and computational costs equations of
   Table~\ref{tab:generalcase} in the main text show that the AnB
   algorithm is more efficient in most cases.}\vspace{0.3cm}
\begin{tabular}{ |l|c|c|c|  }
 \hline
 \rowcolor{lightgray}
  Algorithm & Time & Communication & Memory \\
 \hline
 AnB& $\mathcal{O}(N)$ & $\mathcal{O}(N^2)$ & $\mathcal{O}(N \log(N))$\\
 All-2-All& $\mathcal{O}(N)$ & $\mathcal{O}(N^2)$ & $\mathcal{O}(N \log(N))$\\
 ST& $\mathcal{O}(N)$ & $\mathcal{O}(N^2)$  & $\mathcal{O}(N^2)$ \\
 \hline
\end{tabular}
\label{tab:worstcase}
\end{table}


\end{document}